\documentclass[conference]{IEEEtran}
\IEEEoverridecommandlockouts
\usepackage{cite}
\usepackage{amsmath,amssymb,amsfonts}
\usepackage{algorithmic}
\usepackage{graphicx}
\usepackage{textcomp}
\usepackage{xcolor}
\def\BibTeX{{\rm B\kern-.05em{\sc i\kern-.025em b}\kern-.08em
    T\kern-.1667em\lower.7ex\hbox{E}\kern-.125emX}}

\usepackage{mathrsfs}
\usepackage{amsthm}
\usepackage{algorithm}
\begin{document}
	
\title{The Error Probability of Spatially Coupled Sparse Regression Codes over Memoryless Channels
	}
\author{%
    \IEEEauthorblockN{YuHao Liu\IEEEauthorrefmark{1},
         Yizhou Xu\IEEEauthorrefmark{4}
         and TianQi Hou\IEEEauthorrefmark{3}
    }    
    \IEEEauthorblockA{\IEEEauthorrefmark{1}%
        Department of Mathematical Sciences, Tsinghua University, Beijing, China}\
    \IEEEauthorblockA{\IEEEauthorrefmark{4}Infplane AI Technologies Ltd, Beijing, China}
    \IEEEauthorblockA{\IEEEauthorrefmark{3}%
        Theory Lab, Central Research Institute, 2012 Labs, Huawei Technologies Co., Ltd.
        }
    \IEEEauthorblockA{%
        Emails: yh-liu21@mails.tsinghua.edu.cn,
        yizhou.xu.cs@gmail.com,
        thou@connect.ust.hk
    }
}

\maketitle
\footnotetext[1]{Yuhao Liu and Yizhou Xu contributed equally to this work.}

\begin{abstract}
Sparse Regression Codes (SPARCs) are capacity-achieving codes introduced for communication over the Additive White Gaussian Noise (AWGN) channels and were later extended to general memoryless channels. In particular it was shown via \emph{threshold saturation} that Spatially Coupled Sparse Regression Codes (SC-SPARCs) are capacity-achieving over general memoryless channels \cite{barbier2019universal} when using an Approximate Message Passing decoder (AMP). This paper, for the first time rigorously, analyzes the non-asymptotic performance of the Generalized Approximate Message Passing (GAMP) decoder of SC-SPARCs over memoryless channels, and proves exponential decaying error probability with respect to the code length.
\end{abstract}
	
\begin{IEEEkeywords}
Sparse Regression Codes, Spatial Coupling, capacity-achieving, Generalized Approximate Message Passing
\end{IEEEkeywords}
	
\section{Introduction}
Sparse Regression Codes (SPARCs), also called Sparse Superposition codes, were proposed by Joseph and Barron in \cite{joseph2012least} as a computationally-effective alternative to current coded modulation schemes. In contrast to other coding schemes\cite{gallager1962low,berrou1996near,arikan2009channel} that are provably capacity-achieving only over discrete channel, SPARCs have demonstrated the ability to achieve the Shannon capacity over Additive White Gaussian Noise (AWGN) channels \cite{joseph2012least}.

In \cite{joseph2012least}, SPARCs are shown to be capacity-achieving with the naive maximum likelihood decoder. However, the exponential computational complexity of the maximum likelihood decoder renders it impractical for implementation in real systems. To address this issue, two practical decoders with quardratic computational were then proposed: the adaptive successive hard-decision decoder\cite{joseph2013fast} and the iterative soft-decision decoder\cite{barron2012high}. While both decoders have been shown to asymptotically reach the Shannon capacity, reproducing these asymptotic results for any reasonable finite block length is challenging. This has led to the exploration of decoders based on Approximate Message Passing (AMP).

Due to the sparsity inherent in SPARCs, the reconstruction of the original message can be formulated as a compressed sensing problem. Subsequently, the efficient AMP algorithm\cite{donoho2010message,bayati2011dynamics}, originally developed for compressed sensing problems, was adapted for SPARCs decoders \cite{barbier2017approximate}. The AMP-based decoder, also characterized by quadratic complexity, showed significantly improved performance for finite block lengths. Rangan et al. introduced the Generalized Approximate Message Passing (GAMP) algorithm\cite{rangan2011generalized}, specifically designed for estimating random vectors subject to generic, component-wise, probabilistic channels. Following this, Barbier et al. formulated decoders for SPARCs over memoryless channels that exploit the capabilities of the GAMP algorithm\cite{biyik2017generalized}. However, the decoder based on (G)AMP can only achieve successful decoding when the communication rate $R$ is below the algorithmic threshold \cite{biyik2017generalized,barbier2017approximate}. To bridge the statistical-computational gap, structured SPARCs are constructed using two strategies: power allocation and spatial coupling.

Power Allocation (PA), proposed by Barron and Joseph \cite{joseph2012least}, aims to achieve the Shannon capacity with homogeneous coding matrices and has proven to be beneficial for the AMP decoder. Rush et al. have rigorously proven that SPARCs with exponential power allocation, equipped with the AMP decoder, can achieve exponentially decaying error probability over the AWGN channel\cite{rush2017capacity,rush2018error}. Similar analyses, employing either statistical physics or the rigorous conditional technique, have been extended to rotational-invariant coding scheme for complexity reduction\cite{hou2022sparse,liu2022sparse,xu2023capacity}. They demonstrated that SPARCs with appropriate PA and the Vector Approximate Message Passing (VAMP) decoder are also capacity-achieving  over the AWGN channel, provided the coding scheme satisfies spectrum criterion. However, PA empirically performs worse than SC, and up till now, there is no proof that PA could help SPARCs approach the Shannon capacity over generic channels. Therefore, we focus on spatial coupling in this paper.

Spatial Coupling (SC) was originally proposed for Low-Density Parity-Check (LDPC) codes \cite{felstrom1999time} as a way to practically achieve capacity. It has found successful applications in various information-theoretic problems, including error correcting code, constraint satisfaction and compressed sensing\cite{kudekar2011threshold,hamed2013threshold,krzakala2012statistical,donoho2013information}, to boost the performance for iterative algorithms. Barbier et al. subsequently proposed Spatially Coupled SPARCs (SC-SPARCs) for transmission over AWGN channel \cite{barbier2015approximate,biyik2017generalized,barbier2017approximate}. SC-SPARCs was observed to have enhanced robustness and improved reconstruction empirically compared with PA \cite{barbier2015approximate}. \cite{rush2021capacity} rigorously derived the non-asymptotic error probability of SC-SPARCs over the AWGN channel and proved its capacity-achieving property. For generic channels, \cite{cobo2023bayes} rigorously proved that, for fully factorized prior, GAMP algorithm with SC sensing matrices was able to approach asymptotically the minimum Mean-squared error (MMSE) estimator. These encouraging results motivate us to work on the capacity-achieving property of SC-SPARCs over generic channels in a rigorous manner. While similar arguments have been made through the threshold saturation analysis of the state evolution (SE) obtained by a non-rigorous replica trick \cite{barbier2019universal}, we aim to provide an absolutely rigorous proof.

\textit{Main contributions:}
We consider SC-SPARCs over general memoryless channels with the following contributions:
\begin{itemize}
    \item We delineate the traveling wave mechanics in GAMP decoder for SC-SPARCs non-asymptotically in Lemma \ref{lemma:SE} for the first time.
    \item We derive the exponentially decaying error probability for GAMP decoder of SC-SPARCs in Theorem \ref{theo:main} for $R<C$ (thus capacity-achieving). This is the first rigorous proof for capacity-achieving of SC-SPARCs over generic memoryless channel. 
\end{itemize}

Our results differ from existing literature in following ways. While our analysis is primarily based on \cite{rush2021capacity}, its results for AWGN channel cannot be directly generalized to generic memoryless channel because their analysis relies on the specific properties of the AWGN channel. Further, the asymptotic results in \cite{cobo2023bayes} are not applicable to SC-SPARCs as the section-wise structure of SPARCs requires a stronger convergence. Finally, we did not apply the potential function-based approach in \cite{barbier2019universal} and \cite{cobo2023bayes} in this paper. This decision was made on the belief that the potential function may not be easily employed for non-asymptotic analysis.

\textit{Notation}: We use $[n]$ to represent the set $\{1,2, \cdots, n\}$ for a positive $n$. The sign $\sharp$ denotes the cardinality of a set. We use boldface for matrices and vectors, and plain font for scalars. $\mathcal{D} z$ denotes the Gaussian measure, that is, $\frac{1}{\sqrt{2 \pi}} e^{-\frac{1}{2} z^2} \mathrm{d} z$, and $A \setminus B$ represents $A \cap B^{c}$.

\section{Settings}

\subsection{Code Structure} \label{sec:code}
We consider communication over a generic memoryless channel, where the output symbol $y$ is generated given the input symbol $u$ as $y = h(u,\omega)$. Here $\omega$ represents an unknown noise and $h:\mathbb{R}^2 \to \mathbb{R}$ is a known function. Given the probability law of $\omega$, the channel can be equivalently characterized using the conditional distribution $P_{\text{out}} (y | u)$. For a codeword $\boldsymbol{x} = (x_1, x_2, \cdots, x_n)$ transmitted over $n$ uses of the channel with average power constraint $\sum_{i=1}^n |x_i|^2 / n =1$, the Shannon capacity of the channel is $\mathcal{C} = I(Y;Z)$, equal to the input-output mutual information, where $Z \sim \mathcal{N}(0,1)$ and $Y \sim P_{\text{out}}(\cdot | Z)$. This simple model covers many standard communication channels as follows:
\begin{itemize}
    \item \textbf{AWGN channel:} $y = u + \omega $ where $\omega \sim \mathcal{N}(0,\sigma^2)$ is a white Gaussian noise. The Shannon capacity is $\mathcal{C}_{\text{AWGN}} = \frac{1}{2} \log_2 (1+\sigma^2)$.
    \item \textbf{Binary Erasure Channel (BEC):} $y = (1-\omega) \text{sign} (u)$, where $\omega$ is a Bernoulli random variable with success probability $\epsilon$. 
     The Shannon capacity is $\mathcal{C}_{\text{BEC}} = 1 - \epsilon$. 
    \item \textbf{Binary Symmetric Channel (BSC):} 
     $y = (1-\omega) \text{sign} (u) - \omega \text{sign} (u)$, where $\omega$ is a Bernoulli random variable with success probability $\epsilon$. The Shannon capacity is $\mathcal{C}_{\text{BSC}} = 1-h_2(\epsilon)$, where $h_2(\epsilon)$ is the Shannon entropy of the random variable $\omega$.
\end{itemize}

SPARCs can be described by a linear transform $\boldsymbol{x}=\boldsymbol{A}\boldsymbol{\beta}$ of the codeword $\boldsymbol{\beta} \in \mathbb{R}^{N}$ with the design (or coding) matrix $\boldsymbol{A} \in \mathbb{R}^{n \times N}$. The codeword $\boldsymbol{\beta}$ consists of $L$ non-overlapping sections each with $M$ elements, and thus $N=ML$. There is only one non-zero element in each section of $\boldsymbol{\beta}$, and we set its value to one. $\boldsymbol{\beta}$ is uniformly distributed over all possible $M^L$ messages. The communication rate is therefore $R = \frac{1}{n}L\log M$, as each message $\boldsymbol{\beta}$ uniquely maps to one of the $\log(M^L)$-length strings of input bits.

\begin{figure}[htbp]
    \centering	
    \includegraphics[width=0.95\linewidth]{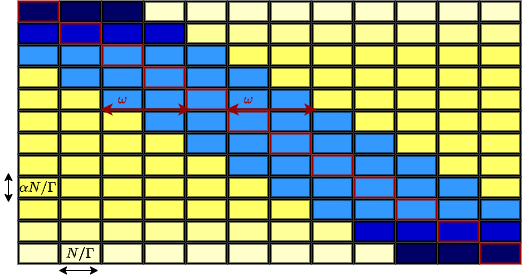}
    \vspace{-1em}
    \caption{A spatially coupled coding matrix for SC-SPARCs with parameters $\omega = 2, \Gamma = 12$. The matrix $\mathbf{A}$ is partitioned to $\Gamma \times \Gamma$ blocks, indexed by $(r,c)$ and labeled as $\mathbf{A}_{rc}$. Each of these block comprises $N / \Gamma $ columns and $n/\Gamma = \alpha N/\Gamma$ rows where $\alpha = (\log M) / MR$. The i.i.d elements in $\mathbf{A}_{rc}$ are distributed as $\mathcal{N}(0, W_{rc} / L)$. In each row, variances in blue blocks collectively share the power $(1-\rho)\Gamma$ equally, while the variances in yellow blocks share the power $\rho \Gamma$. This meticulous design satisfies the row normalization condition $\frac{1}{\Gamma} \sum_{c} W_{rc} = 1, \forall r$. The design matrix degrades to a special case of SC ensembles in \cite{barbier2019universal} when $\rho=0$.
    \label{fig:fig1}
    }   
\end{figure}
    
In SC-SPARCs, the design matrix $\boldsymbol{A}$ is partitioned into $\Gamma \times \Gamma$ blocks indexed by $(r,c)$, each with $N/\Gamma$ columns and $n/ \Gamma$ rows. The rows and columns in the block $(r,c)$ are collected by the sets $\mathcal{I}_r$ and $\mathcal{J}_c$, respectively, denoted by
\begin{equation}
\begin{gathered}
\mathcal{I}_r = \{(r-1)\frac{n}{\Gamma}+1, \cdots, r \frac{n}{\Gamma} \}, \quad \text{for} \; r \in [\Gamma] \\
\mathcal{J}_c = \{(c-1)\frac{N}{\Gamma}+1, \cdots, c \frac{N}{\Gamma} \}, \quad \text{for} \; c \in [\Gamma] 
\end{gathered}
\end{equation}
The block $(r,c)$ is then represented by $\boldsymbol{A}_{rc}$ containing entries $A_{ij}$ for $i \in \mathcal{I}_r$ and $j \in \mathcal{J}_c$. 
The matrix $\boldsymbol{A}$ consists of independently Gaussian distributed entries with variances specified by the base matrix $\boldsymbol{W} \in \mathbb{R}^{\Gamma \times \Gamma}$.  Entries in $\mathbf{A}_{rc}$ are i.i.d distributed as $\mathcal{N}(0,\frac{W_{rc}}{L})$. In order to satisfy the average power constraint of $\boldsymbol{x}$, the base matrix should satisfy the variance normalization condition:
\begin{equation}
    \frac{1}{\Gamma^2} \sum_{r=1}^{\Gamma} \sum_{c=1}^{\Gamma} W_{rc} = 1.
\end{equation}

In this paper, we adopt the row variance normalization condition \cite{barbier2019universal} as $\frac{1}{\Gamma} \sum_{c=1}^{\Gamma} W_{rc} = 1$ for all $r \in [\Gamma]$.
This normalization enforces homogeneous power over all components of $\boldsymbol{x}$. The construction of the base matrix with variance normalization can be achieved through various methods. We utilize the following base matrix inspired by SC-LDPC codes \cite{mitchell2015spatially}
\begin{equation}
    W_{r c}=\left\{\begin{array}{ll}
    (1-\rho) \cdot \frac{\Gamma}{\gamma_r} & \text { if } |r-c| \leq \omega \\
    \rho  \cdot \frac{\Gamma}{\Gamma - \gamma_r} & \text { otherwise. }
    \end{array}\right.
    \vspace{-0.3em}
\end{equation}
where $\gamma_r = \sharp \{c| 1 \leq c \leq \Gamma , |r-c| \leq \omega \}$. The SC-SPARCs design matrix is then parameterized by the tuple $(n, R, M, \Gamma, \omega, \rho)$. An example with parameters $\omega = 2, \Gamma = 12$ is shown in Fig.~\ref{fig:fig1}. We note that $\rho=0$ is preferred for practical implementation, but we need to choose a small positive $\rho$ proportional to the rate gap from capacity to address technical difficulties, as discussed in \cite{rush2021capacity}.

We also introduce a seed at the boundaries, assuming that the first $4\omega$ and last $4\omega$ sections of the transmitted codeword are known at the decoder side. The indexes of the seed are collected in the set $\mathcal{I} = \{1,\cdots, 4\omega \} \cup \{\Gamma-4\omega+1,\cdots, \Gamma\}$. The seed can be interpreted as perfect side information to enhance the performance through a 'reconstruction wave' propagating from boundary to middle\cite{barbier2019universal,rush2021capacity}, under (G)AMP decoding. The seeds will result in a rate loss so that the effective rate reads
	\begin{equation}
		R_{\text{eff}}=R(1-\frac{8\omega}{\Gamma}).
	\end{equation}
However, by setting $\Gamma>\omega^2$, the rate loss becomes negligible when $\omega\to\infty$ at any fixed $R$.

\subsection{GAMP decoder for SC-SPARCs}
\begin{algorithm}[h]
    \caption{GAMP decoder for SC-SPARCs}
    \begin{algorithmic}[1]
    \REQUIRE Max iteration T, design matrix $\boldsymbol{A}\in\mathbb{R}^{n\times N}$, observation $\boldsymbol{y}\in\mathbb{R}^n$.
    \STATE Initialization: $\boldsymbol{\beta}^{-1}_c=\boldsymbol{\beta}_c$ for $c \in \mathcal{I} $ , and $\boldsymbol{\beta}_c^{-1}=0$ otherwise . $\boldsymbol{s}^{-1}_r=0$ for $r\in[\Gamma]$
    \FOR { t = 0 to T }   
		\STATE  $\boldsymbol{p}_r^t=\sum_{c=0}^\Gamma\boldsymbol{A}_{rc}\boldsymbol{\beta}^{t-1}_c-\sigma^t_r\boldsymbol{s}^{t-1}_r$, $r\in[\Gamma]$
		\STATE $\boldsymbol{s}^t_r=\mathbf{g}_{\text{out}}(\boldsymbol{p}^t_r,\boldsymbol{y}_r,\sigma^t_r)$, $r\in[\Gamma]$
		\STATE $\boldsymbol{r}^t_c=\boldsymbol{\beta}^t_c+\tau^t_c\sum_{r=1}^R\boldsymbol{A}_{rc}^T\boldsymbol{s}^t_r$, $c\in[\Gamma] \setminus \mathcal{I} $
		\STATE $\boldsymbol{\beta}^t_c=\mathbf{g}_{\text{in}}(\boldsymbol{r}^t_c,\tau^t_c)$, $c\in[\Gamma] \setminus \mathcal{I}$
        \STATE Keep $\boldsymbol{\beta}_c^t = \boldsymbol{\beta}_c$ for $c \in \mathcal{I}$ throughout the iteration
    \ENDFOR
    \RETURN $\boldsymbol{\hat{\beta}}$ obtained from $\boldsymbol{\beta}^T$ by setting the largest entry in each section to $1$ and others to $0$
\end{algorithmic}
\label{algo:GAMP}
\end{algorithm}

\vspace{-0.5em}
The GAMP decoder for SC-SPARCs is presented in Algorithm \ref{algo:GAMP}. Before we proceed, we introduce the notation: when a row- or column- block index ($r$ or $c$) appears as a a subscript of a vector it denotes the corresponding block of that vector containing elements with indexes belong to $\mathcal{I}_r$ or $\mathcal{J}_c$ (e.g., $\boldsymbol{\beta}_c = ({\beta}_i)_{i \in \mathcal{J}_c}$).

The crux of the algorithm lies in the iterative resolution of two decoupled channel estimation problems. The first estimator, denoted as $\mathbf{g}_{\text{in}}(\boldsymbol{r}, \tau)$, is related to SPARCs with the expression as 
\vspace{-0.8em}
\begin{equation}
    [\mathbf{g}_{\text{in}}(\boldsymbol{r},\tau)]_i=\frac{e^{r_i/\tau}}{\sum_{j\in\text{sec}(\ell_i)}e^{r_j/\tau}}
\end{equation}
\newpage
where $\ell_i$ signifies section to which the $i$-th scalar component belongs and the operator $\text{sec}(\ell):=\{j\in\mathbb{N}\;|\;M(\ell-1)<j\le M\ell\}$ gathers the indices of elements belonging to the $\ell$-th section. The estimator $\mathbf{g}_{\text{in}}$ can be interpreted as the minimum mean-square error (MMSE) estimator for the channel  
\begin{equation}
    \boldsymbol{R}= \boldsymbol{\beta}_{0} + \sqrt{\tau} \boldsymbol{Z}
\end{equation}
where $\boldsymbol{Z} \sim \mathcal{N} (\boldsymbol{0}, \boldsymbol{I}_{N/\Gamma})$ and $\boldsymbol{\beta}_0 \sim P_0 ^ {\otimes L/\Gamma}$ independent of $\boldsymbol{Z}$. $P_0$ represents  the uniform distribution over the $M$- dimensional vectors with a single non-zero entry equal to $1$, i.e., $ P_0(\boldsymbol{x}) = M^{-1} \sum_{i\in [M]} \delta_{x_i,1} \prod_{j \neq i} \delta_{x_j, 0}$.
The estimator $\textbf{g}_{\text{out}}$ is related to the memoryless channel with the following expression:
\begin{equation}
    \mathbf{g}_{\text{out}} (\boldsymbol{p}, \boldsymbol{y}, \sigma) = \frac{\mathbb{E}_p \boldsymbol{z} - \boldsymbol{p}}{\sigma}
\end{equation}
where the expectation is taken with respect to the posterior distribution  
$p(\boldsymbol{z} | \boldsymbol{y}) \propto P_{\text{out}} (\boldsymbol{y} | \boldsymbol{z}) \exp{(-\frac{1}{2 \sigma} \|\boldsymbol{z} - \boldsymbol{p}\|_2^2)}$.
The term $\mathbb{E}_p \boldsymbol{z}$ in the estimator $\mathbf{g}_{\text{out}}$ can be interpreted as the MMSE estimator over the channel given the observation $\mathbf{Y} = \boldsymbol{y}$
\begin{equation}
    \mathbf{Z} \sim \mathcal{N}(\boldsymbol{}{p}, \sigma \mathbf{I}_{n/\Gamma}), \quad \mathbf{Y} \sim P_{\text{out}} (\cdot | \mathbf{Z}).
\end{equation}
we refer to \cite{rangan2011generalized} for additional background on GAMP.

The GAMP decoder for SPARCS in Algorithm \ref{algo:GAMP} was first introduced in \cite{biyik2017generalized} together with SE to track its asymptotic large-system behavior. Recall that SE is a set of recursions used to predict the performance of the algorithm, see \cite{bayati2011dynamics}. The SE in the present context reads
\begin{equation} 
    \label{SE}
    \begin{aligned}  
		&\sigma^t_r=\frac{1}{\Gamma}\sum_{c=1}^\Gamma W_{rc}\psi_c^t,\quad \phi^t_r=f_{\text{out}}(\sigma^t_r)^{-1},\quad r\in[\Gamma]\\
		&\tau^t_c=\frac{R}{\ln M}\left[\frac{1}{\Gamma}\sum_{r=1}^\Gamma\frac{W_{rc}}{\phi^t_r}\right]^{-1},\quad\psi^{t+1}_c=1-\varepsilon(\tau^t_c),\quad c\in[\hat{\Gamma}] ,
    \end{aligned}
\end{equation}
with the initialization $\psi_c^0 = 0 $ (keep $\psi_c^t =0$) for $c \in \mathcal{I}$ and $\psi_c^0 = 1 $ for $c \in [\hat{\Gamma}]$  where 
$[\hat{\Gamma}] = [\Gamma] \setminus \mathcal{I}$. In the following, we assume that row and column indexes $r$ and $c$ within the range $r \in [\Gamma], c\in[\hat{\Gamma}]$. The expression of $f_{\text{out}}(\sigma)$ is
\begin{equation} \label{eq:fout1}
    f_{\text{out}}(\sigma_r^t) = -\mathbb{E} \frac{\partial}{\partial p} g_{\text{out}} (Y,P,\sigma^t_r),
\end{equation}
and the expectation is taken over $(Y,P)$, where $Y \sim P_{\text{out}} (\cdot | Z_0)$ and $(P,Z_0)$ follows a joint Gaussian distribution with $\mathbb{E} [Z_0^2] = 1$, $\mathbb{E} [P^2] = \mathbb{E} [PZ_0] = 1 - \sigma^t_r$.
The expression of $\epsilon(\tau_c^t)$ is
\begin{equation}
    \varepsilon(\tau^t_c)=\mathbb{E}\left[\frac{e^{U_1/\sqrt{\tau^t_c}}}
    {e^{U_1/\sqrt{\tau^t_c}}+e^{-1/\tau_c^t\sum_{j=2}^Me^{U_j/\sqrt{\tau^t_c}}}}\right].
\end{equation}

The SE recursion $\psi^t_c\approx\frac{||\boldsymbol{\beta}_c^t-\boldsymbol{\beta}_c||_2^2}{L/\Gamma}$ can closely track the decoding mean square error (MSE). However, we need to stress that it is never rigorously proved non-asymptotically, which is the scope of this paper.

\section{Decoding Progression According to SE} 
\label{sec:decrease_SE}
Before illustrating the decreasing progress of SE, we elucidate three properties of $f_{\text{out}}$ that prove useful in the subsequent analysis.
\newtheorem{proposition}{Proposition}
\begin{proposition} 
\label{prop:prop1}
   For a continuously differentiable conditional probability density function $P_{\text{out}} (y | x)$ with respect to $x$,
   the expression of $f_{\text{out}}(\sigma)$ in (\ref{eq:fout1}) is
   \begin{equation} \label{eq:fout2}
		f_{\text{out}}(\sigma)=\int \mathrm{d}y \mathcal{D}\xi\frac{[P_{\text{out}}^{\prime}(y|\sqrt{\sigma}z+\sqrt{1-\sigma}\xi) \mathcal{D} z]^2}{\int P_{\text{out}}(y|\sqrt{\sigma}z+\sqrt{1-\sigma}\xi) \mathcal{D} z}
   \end{equation}
   where $P^{\prime}_{\text{out}} (y|x) = \frac{\partial}{\partial x} P_{\text{out}} (y|x)$. Then,  $f_{\text{out}} (\sigma)$ is non-negative for all $\sigma \in [0,1]$.
\end{proposition}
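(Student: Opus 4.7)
The plan is to rewrite $f_{\text{out}}(\sigma)$ in terms of a partition function and then exploit the joint law of $(Y,P)$ to turn the expectation into the explicit integral in (\ref{eq:fout2}). Define
\[
Z(y,p) := \int P_{\text{out}}(y|z)\, e^{-(z-p)^2/(2\sigma)}\, dz.
\]
A direct Tweedie-type calculation (noting that $\partial_p$ applied to the Gaussian kernel brings down a factor $(z-p)/\sigma$) shows $\partial_p \log Z(y,p) = (\mathbb{E}_p[z] - p)/\sigma = g_{\text{out}}(p,y,\sigma)$. Differentiating once more,
\[
-\partial_p g_{\text{out}}(p,y,\sigma) = \frac{(\partial_p Z(y,p))^2}{Z(y,p)^2} - \frac{\partial_p^2 Z(y,p)}{Z(y,p)}.
\]

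Next I would use the prescribed joint law to simplify the expectation. Since $(P,Z_0)$ is Gaussian with $\mathrm{Var}(P) = \mathrm{Cov}(P,Z_0) = 1-\sigma$ and $\mathrm{Var}(Z_0)=1$, one may realize $P = \sqrt{1-\sigma}\,\xi$ and $Z_0 = P + \sqrt{\sigma}\,W$ with independent $\xi,W \sim \mathcal{N}(0,1)$. After the change of variables $z = p + \sqrt{\sigma}u$ inside $Z$, the conditional density of $Y$ given $P$ is exactly $Z(y,P)/\sqrt{2\pi\sigma}$. This is the crucial cancellation: taking the expectation over $(Y,P)$ multiplies the integrand above by $Z(y,P)$, killing the $Z$ in both denominators. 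For the term involving $\partial_p^2 Z$, what remains is $\int dy\, \mathcal{D}\xi \int P''_{\text{out}}(y|\sqrt{\sigma}z+\sqrt{1-\sigma}\xi)\,\mathcal{D}z$, which vanishes by differentiating the identity $\int P_{\text{out}}(y|x)\,dy \equiv 1$ twice in $x$ and invoking Fubini. For the term involving $(\partial_p Z)^2$, one recovers precisely the right-hand side of (\ref{eq:fout2}).

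Non-negativity of $f_{\text{out}}(\sigma)$ is then immediate, since in formula (\ref{eq:fout2}) the integrand is a squared real quantity divided by a strictly positive one. The only genuine subtlety in the argument is the interchange of differentiation and integration that yields $\int P''_{\text{out}}(y|x)\,dy = 0$, together with the Fubini exchange of the $y$-integral and the Gaussian integrals; both are standard under mild domination hypotheses accompanying the assumed continuous differentiability. I expect no other real obstacle: the core of the proposition is the partition-function bookkeeping, and once the $(Y,P)$-density is identified with $Z(y,P)/\sqrt{2\pi\sigma}$, the identification with (\ref{eq:fout2}) is essentially algebraic.
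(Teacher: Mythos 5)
Your proposal is correct and follows essentially the same route as the paper's proof: condition on $P$, identify the conditional law of $Y$ given $P$ with the partition function $Z(y,P)$ up to normalization, and let the expectation cancel the denominators, with the $\partial_p^2 Z$ term vanishing by normalization of $P_{\text{out}}$ and the squared term giving (\ref{eq:fout2}); your write-up merely makes explicit the Tweedie-type bookkeeping that the paper compresses into a single displayed equality. One cosmetic remark: since the proposition assumes only \emph{once} continuous differentiability of $P_{\text{out}}(y|x)$ in $x$, it is cleaner to kill the second term by noting that $\int Z(y,p)\,\mathrm{d}y$ is constant in $p$ (so its second $p$-derivative vanishes after differentiating under the integral against the Gaussian kernel) rather than invoking $P_{\text{out}}''$.
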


\begin{proposition}
    \label{prop:prop2}
    $f_{\text{out}}(\sigma) = -2 \frac{\mathrm{d} }{\mathrm{d} \sigma} \Psi_{\text{out}}(\sigma)$ where $\Psi_{\text{out}}(\sigma)$
   is a potential function defined by
    \begin{equation}
		\Psi_{\text{out}}(\sigma)=\mathbb{E}\log\int \mathcal{D}z P_{\text{out}}(Y|\sqrt{\sigma}z+\sqrt{1-\sigma}\xi),
    \end{equation} 
    Here, the expectation is taken with respect to $Y\sim P_{\text{out}}(\cdot|\sqrt{\sigma}z^*+\sqrt{1-\sigma}\xi)$, where $z^*\sim\mathcal{N}(0,1)$, $\xi\sim\mathcal{N}(0,1)$, and $z^*$, $\xi$ are independent. Furthermore, the Shannon capacity can be expressed in terms of $f_{\text{out}}(\sigma)$ as 
    \begin{equation} \label{eq:Shannon2}
       \mathcal{C}=\Psi_{\text{out}}(0) -\Psi_{\text{out}}(1) = \frac{1}{2}\int_0^1f_{\text{out}}(x) \mathrm{d}x. 
    \end{equation}   
\end{proposition}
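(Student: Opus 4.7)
The plan is to interpret $\phi(y,\xi,\sigma) := \int \mathcal{D} z\, P_{\text{out}}(y \mid \sqrt{\sigma}\, z + \sqrt{1-\sigma}\,\xi)$ as the conditional density $p(y \mid \xi)$ of the sampling model in Proposition 2. Indeed, integrating out $z^{*}$ in $Y \sim P_{\text{out}}(\cdot \mid \sqrt{\sigma}\, z^{*} + \sqrt{1-\sigma}\,\xi)$ produces exactly $\phi$, so $\Psi_{\text{out}}(\sigma) = \int \mathcal{D}\xi \int dy\; \phi \log \phi$. I will then differentiate in $\sigma$ and reshape the resulting expression by integration by parts in $\xi$ until the Fisher-information kernel $(\partial_{\xi}\phi)^{2}/\phi$ of formula (13) appears. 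The capacity identity will follow by integrating $f_{\text{out}}$ over $[0,1]$ and reading off the boundary values.

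The core computation rests on the heat-equation identity $\partial_{\sigma}\phi = \frac{1}{2(1-\sigma)}\partial_{\xi}^{2}\phi - \frac{\xi}{2(1-\sigma)}\partial_{\xi}\phi$, which I would derive in one of two equivalent ways: (i) view $\phi$ as a Gaussian convolution of $P_{\text{out}}(y\mid \cdot)$ with kernel $\mathcal{N}(\sqrt{1-\sigma}\,\xi, \sigma)$ and apply $\partial_{\sigma}=\frac{1}{2}\partial_{m}^{2}$ at fixed mean $m$, then reinstate the $\sigma$-dependence of $m=\sqrt{1-\sigma}\,\xi$; or (ii) differentiate inside the Gaussian integral and use Stein's lemma on the stray factor of $z$. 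Plugged into $\frac{d}{d\sigma}\Psi_{\text{out}} = \int \mathcal{D}\xi\, dy\, \partial_{\sigma}\phi\,\log\phi$ (the $\log\phi$-free piece vanishes by normalization $\int dy\, \phi = 1$), and after a double integration by parts in $\xi$ against the Gaussian weight $e^{-\xi^{2}/2}$, the boundary-weight correction exactly cancels the explicit drift term $-\frac{\xi}{2(1-\sigma)}\partial_{\xi}\phi\,\log\phi$. What remains is $-\frac{1}{2(1-\sigma)}\int \mathcal{D}\xi\, dy\,(\partial_{\xi}\phi)^{2}/\phi$, and the chain rule $\partial_{\xi}\phi = \sqrt{1-\sigma}\int \mathcal{D}z\, P'_{\text{out}}(y\mid \sqrt{\sigma}\, z + \sqrt{1-\sigma}\,\xi)$ converts this to $-\frac{1}{2}f_{\text{out}}(\sigma)$ via (13), establishing the derivative identity.

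For the capacity formula, I integrate $f_{\text{out}} = -2\,\Psi_{\text{out}}'$ over $[0,1]$ to obtain $\frac{1}{2}\int_{0}^{1} f_{\text{out}}(x)\,dx = \Psi_{\text{out}}(0) - \Psi_{\text{out}}(1)$. At $\sigma=0$, the latent $\sqrt{\sigma}\, z + \sqrt{1-\sigma}\,\xi$ collapses to $\xi$, so $\phi(y,\xi,0) = P_{\text{out}}(y\mid \xi)$ and $\Psi_{\text{out}}(0) = \mathbb{E}\log P_{\text{out}}(Y\mid \xi) = -H(Y\mid Z)$ with $Z\sim \mathcal{N}(0,1)$. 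At $\sigma=1$ the integration variable $z$ plays the role of the latent seen by $P_{\text{out}}$, the argument no longer depends on $\xi$, and $\phi(y,\xi,1) = \int \mathcal{D}z\, P_{\text{out}}(y\mid z) = p(y)$ is the marginal output law, giving $\Psi_{\text{out}}(1) = -H(Y)$. Their difference equals $I(Y;Z) = \mathcal{C}$, matching the definition from Section II.

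The main obstacle is rigorously justifying the derivative-integral swaps and the two integrations by parts in $\xi$: both require mild smoothness and decay of $P_{\text{out}}(y\mid \cdot)$ already implicit in Proposition 1, and for channels with discrete output such as the BEC or BSC the symbol $dy$ should be reinterpreted as a sum over $y \in \{-1,+1\}$ (or $\{-1,0,+1\}$) while all derivatives act in the continuous input variable. Under that reading the calculation is identical; the heat-equation identity is the only step with genuine content, and the rest is bookkeeping.
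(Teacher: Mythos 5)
Your proposal is correct, and it is organized differently enough from the paper's argument to be worth contrasting. The paper differentiates $\Psi_{\text{out}}$ directly in $\sigma$ with both Gaussian variables $z^*$ and $\xi$ kept explicit: the term coming from the $\sigma$-dependence of the law of $Y$ vanishes by normalization, and Stein's lemma is then applied separately to the $z$- and $\xi$-weighted terms, after which the two second-derivative $\log$-terms cancel and only the Fisher-information kernel of (13) survives. You instead first integrate out $z^*$ to recognize $\phi(y,\xi,\sigma)=\int\mathcal{D}z\,P_{\text{out}}(y|\sqrt{\sigma}z+\sqrt{1-\sigma}\xi)$ as the conditional law of $Y$ given $\xi$, so that $\Psi_{\text{out}}(\sigma)=\int\mathcal{D}\xi\,\mathrm{d}y\,\phi\log\phi$, and then drive the computation with the heat-equation identity $\partial_\sigma\phi=\frac{1}{2(1-\sigma)}\partial_\xi^2\phi-\frac{\xi}{2(1-\sigma)}\partial_\xi\phi$ (which I checked and is correct) plus one Gaussian integration by parts in $\xi$, whose weight-derivative term cancels the drift term exactly as you say; the chain-rule factor $\sqrt{1-\sigma}$ in $\partial_\xi\phi$ then cancels the $\frac{1}{1-\sigma}$ prefactor and yields $\Psi_{\text{out}}'=-\frac12 f_{\text{out}}$. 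This is essentially a de Bruijn-identity rendering of the same Gaussian-integration-by-parts computation: it buys a cleaner bookkeeping (a single function $\phi$, one IBP, no cancellation between two $l''\log$ terms) at the price of justifying the heat-equation/derivative interchanges, which you correctly flag as requiring the same mild regularity the paper implicitly assumes. Your boundary evaluation is also the correct one: $\Psi_{\text{out}}(0)=-H(Y|Z)$ and $\Psi_{\text{out}}(1)=-H(Y)$, whose difference is $I(Y;Z)=\mathcal{C}$; note that the paper's appendix states these identifications with the labels swapped and with missing minus signs, so your version is in fact the tidier one, and your remark about reinterpreting $\mathrm{d}y$ as a sum for discrete-output channels is an appropriate caveat that the paper leaves tacit.
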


\begin{proposition}
   \label{prop:prop3}
   The function $f_{\text{out}}(\sigma)$ is non-increasing for all $\sigma \in [0,1]$.
\end{proposition}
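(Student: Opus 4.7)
The plan is to show $f_{\text{out}}(\sigma_2) \leq f_{\text{out}}(\sigma_1)$ for every $0 \leq \sigma_1 < \sigma_2 \leq 1$ by reinterpreting $f_{\text{out}}(\sigma)$ as an averaged Fisher information for a Gaussian-smoothed version of the channel $P_{\text{out}}$, and then invoking the contraction of Fisher information under further Gaussian convolution (a Stam/Blachman-type inequality).

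First I will rewrite the expression in Proposition~\ref{prop:prop1} through the change of variable $x = \sqrt{1-\sigma}\,\xi$, which sends $\mathcal{D}\xi$ to $\mathcal{N}(x;0,1-\sigma)\,dx$. Setting
\[
Q(y;x,\sigma) := \int P_{\text{out}}(y|x+\sqrt{\sigma}\,z)\,\mathcal{D}z,\quad J(x;\sigma) := \int \frac{(\partial_x Q(y;x,\sigma))^2}{Q(y;x,\sigma)}\,dy,
\]
this produces the representation $f_{\text{out}}(\sigma) = \int \mathcal{N}(x;0,1-\sigma)\,J(x;\sigma)\,dx$, where $J(x;\sigma)$ is exactly the Fisher information of the family $y\mapsto Q(y;x,\sigma)$ about the location parameter $x$, and $Q(y;x,\sigma)$ is the density of $Y\sim P_{\text{out}}(\cdot|x+\sqrt{\sigma}\,Z)$ with $Z\sim\mathcal{N}(0,1)$.

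For the comparison step, I will use the Gaussian semigroup identity $\mathcal{N}(0,\sigma_2) = \mathcal{N}(0,\sigma_1) \ast \mathcal{N}(0,\sigma_2-\sigma_1)$ to obtain $Q(y;x,\sigma_2) = \int Q(y;x+w,\sigma_1)\,\mathcal{N}(w;0,\sigma_2-\sigma_1)\,dw$, and then apply the Cauchy--Schwarz inequality in $w$ pointwise in $y$ to derive the Fisher-information contraction
\[
J(x;\sigma_2) \leq \int J(x+w;\sigma_1)\,\mathcal{N}(w;0,\sigma_2-\sigma_1)\,dw.
\]
Integrating this against $\mathcal{N}(x;0,1-\sigma_2)\,dx$, swapping the order of integration by Fubini, translating $x' = x+w$, and using $\mathcal{N}(0,1-\sigma_2)\ast\mathcal{N}(0,\sigma_2-\sigma_1) = \mathcal{N}(0,1-\sigma_1)$, one collapses the right-hand side to $\int J(x';\sigma_1)\,\mathcal{N}(x';0,1-\sigma_1)\,dx' = f_{\text{out}}(\sigma_1)$, which closes the argument.

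The main technical step is the Fisher-information contraction, which is routine once the Gaussian mixture representation of $Q(y;x,\sigma_2)$ is set up: Cauchy--Schwarz is applied to the representation of $\partial_x Q(y;x,\sigma_2)$ with weight $\mathcal{N}(w;0,\sigma_2-\sigma_1)$ and normalizer $Q(y;x+w,\sigma_1)$. The remaining delicate point is purely arithmetic: the ``prior'' variance $1-\sigma$ of the outer averaging and the ``smoothing'' variance $\sigma$ inside $Q$ are exactly complementary, so that the two Gaussian weights combine in the final step into the single Gaussian $\mathcal{N}(0,1-\sigma_1)$ matching $f_{\text{out}}(\sigma_1)$.
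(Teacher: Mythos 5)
Your argument is correct, and it is genuinely different from what the paper does: the paper does not prove Proposition~\ref{prop:prop3} at all, but cites it as a corollary of the convexity of the potential $\Psi_{\text{out}}$ established in the GLM literature (\cite[Proposition~20]{barbier2019optimal}), which is typically obtained by computing $\Psi_{\text{out}}''$ and exhibiting it as a nonnegative variance-type quantity. You instead give a self-contained proof: after the substitution $x=\sqrt{1-\sigma}\,\xi$, the representation of Proposition~\ref{prop:prop1} becomes $f_{\text{out}}(\sigma)=\mathbb{E}_{x\sim\mathcal{N}(0,1-\sigma)}\,J(x;\sigma)$ with $J$ the location-Fisher information of the Gaussian-smoothed channel $Q(\cdot;x,\sigma)$, and the monotonicity follows from the Gaussian semigroup decomposition $\sqrt{\sigma_2}Z\overset{d}{=}\sqrt{\sigma_1}Z'+W$ together with the Cauchy--Schwarz (Stam/Blachman-type) contraction $J(x;\sigma_2)\leq\int\mathcal{N}(w;0,\sigma_2-\sigma_1)J(x+w;\sigma_1)\,dw$; the outer prior variance $1-\sigma_2$ and the transferred smoothing variance $\sigma_2-\sigma_1$ recombine into $\mathcal{N}(0,1-\sigma_1)$, which is exactly the arithmetic that makes the bound close to $f_{\text{out}}(\sigma_1)$. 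What each route buys: the citation is one line but imports a nontrivial external result proved for the GLM free energy, whereas your argument uses only Cauchy--Schwarz and the Gaussian semigroup, stays entirely within the objects already introduced in Propositions~\ref{prop:prop1}--\ref{prop:prop2}, and makes transparent why both the smoothing level $\sigma$ and the prior variance $1-\sigma$ must move together. The only caveats, which are at the same level of rigor as the paper's own Appendix~A derivations, are the routine justifications of differentiating under the integral sign and of Fubini/Tonelli, and the convention $0^2/0=0$ at output points where the mixture density $Q(y;x+w,\sigma_1)$ vanishes; all of these hold under the continuous-differentiability hypothesis already assumed in Proposition~\ref{prop:prop1}.
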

The proofs of Propositions \ref{prop:prop1} and \ref{prop:prop2} are similar to the methodology shown in \cite[Claim 1]{barbier2019universal}, which are given in Appendix \ref{app:proof}. Proposition \ref{prop:prop3} is a direct corollary from \cite[Proposition 20]{barbier2019optimal}. 

Equipped with above propositions, we can delineate the decoder progress of SE with the following lemma:
\newtheorem{lemma}{Lemma}
\begin{lemma}
    We consider $M$ large enough. Define $\Delta=\mathcal{C}-R$ and $\delta\in(0,\min\{\frac{\Delta}{2R},\frac{1}{2}\})$. Assume $0<\rho< h(\Delta)/2$, where $h(\Delta)>0$ is the solution of
    \begin{equation}
		\int_0^{h} f_{out}(x)dx=\frac{3}{2}\Delta.
		\label{eq:h(Delta)}
    \end{equation}
    If the rate satisfies $R<\frac{1-\rho}{2+\delta}f_{out}(1)$, all sections are simultaneously decoded at one iteration, i.e., for all $c\in[\Gamma]$,
    \begin{equation}
		\psi_c^1\leq f_{M,\delta}:=\frac{M^{-k\delta}}{\delta\sqrt{\log M}},
		\label{eq:all_decoded}
    \end{equation}
    where $k>0$ is a universal constant.
		
    Otherwise, if the rate satisfies $\frac{1-\rho}{2+\delta}f_{out}(1)\leq R< \mathcal{C}$. Let
    \begin{equation}
		\begin{aligned}
		&g=\max_k\{k\in\mathbb{N}|k\leq2\omega+1,\\
		&\qquad\int_0^{2\rho+\frac{k(1-\rho)}{2\omega+1}}f_{out}(x)dx-\frac{k}{2\omega+1}(1-\rho)f_{out}(1)<\frac{3}{2}\Delta\}.
		\end{aligned}
		\label{eq:g(Delta)}
    \end{equation}
    Assume $\omega$ large enough such that $g\geq1$. Then for $t\geq1$ and $c\leq\max\{4\omega+tg,\lceil\frac{\Gamma}{2}\rceil\}$, we have
    \begin{equation}
		\psi_c^t=\psi_{\Gamma-c+1}^t\leq f_{M,\delta}.
		\label{eq:decode_progression}
    \end{equation}
    \label{lemma:SE}
\end{lemma}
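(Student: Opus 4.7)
The plan is to reduce both \eqref{eq:all_decoded} and \eqref{eq:decode_progression} to a uniform scalar bound of the form $\tau_c^t\le 1/((2+\delta)\log M)$: the tail estimate $1-\varepsilon(\tau)\le f_{M,\delta}$ valid in that regime is the same extreme-value computation on $M-1$ off-coordinate Gaussians used in the AWGN SPARCs literature \cite{rush2021capacity}, and it transfers verbatim since $\varepsilon$ depends only on $M$ and $\tau$ (not on $P_{\text{out}}$). Writing $S_c(t):=\frac{1}{\Gamma}\sum_r W_{rc}f_{\text{out}}(\sigma_r^t)$, the remaining work is to exhibit $S_c(t)\ge(2+\delta)R$ for the relevant pairs $(c,t)$.

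\textbf{Low-rate case.} Since $\psi_c^0\in\{0,1\}$ and $\frac{1}{\Gamma}\sum_c W_{rc}=1$, we have $\sigma_r^0\le 1$, hence $f_{\text{out}}(\sigma_r^0)\ge f_{\text{out}}(1)$ by Proposition~\ref{prop:prop3}. Restricting the $r$-sum in $S_c(0)$ to the band $|r-c|\le\omega$, whose column weights sum to at least $(1-\rho)$ for every non-seed $c$ (each term equals $(1-\rho)\Gamma/(2\omega+1)$ in the bulk and there are $2\omega+1$ of them), yields $S_c(0)\ge(1-\rho)f_{\text{out}}(1)$. The hypothesis $R<\tfrac{1-\rho}{2+\delta}f_{\text{out}}(1)$ then closes the argument; seeds satisfy $\psi_c^t=0$ identically.

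\textbf{Traveling-wave case.} I would induct on $t$, assuming $\psi_c^t\le f_{M,\delta}$ for $c\le A_t:=4\omega+tg$ and, by symmetry, for $c\ge\Gamma-A_t+1$. Fix $c=A_t+k$ with $1\le k\le g$ and a row $r=c-\omega+j$ with $0\le j\le 2\omega$. A direct count of the $2\omega+1$ band-neighbours of $r$ shows that $\max\{0,2\omega+1-k-j\}$ of them lie in the already-decoded set and so contribute $\le f_{M,\delta}$, while the remaining $\min\{2\omega+1,k+j\}$ contribute at most $1$; the off-band part is controlled by $\rho$ through the row normalization. Aggregating these gives $\sigma_r^t\le 2\rho+(1-\rho)(k+j)/(2\omega+1)+o_M(1)$, the extra $\rho$ absorbing the $f_{M,\delta}$-residuals and boundary slack. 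Plugging this into $S_c(t)$, using monotonicity of $f_{\text{out}}$ from Proposition~\ref{prop:prop3}, and treating the index $j$ as a Riemann partition with step $(1-\rho)/(2\omega+1)$ yields
\begin{equation*}
S_c(t)\ \ge\ \int_{x_0}^{1} f_{\text{out}}(y)\,dy+\tfrac{(1-\rho)k}{2\omega+1}f_{\text{out}}(1),\qquad x_0=2\rho+\tfrac{(1-\rho)k}{2\omega+1}.
\end{equation*}
Proposition~\ref{prop:prop2} rewrites $\int_{x_0}^1 f_{\text{out}}=2\mathcal{C}-\int_0^{x_0}f_{\text{out}}$, and the defining inequality of $g$ in \eqref{eq:g(Delta)} (applicable because $k\le g$) gives $\int_0^{x_0}f_{\text{out}}-\tfrac{(1-\rho)k}{2\omega+1}f_{\text{out}}(1)<\tfrac{3}{2}\Delta$. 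Combining, $S_c(t)>2\mathcal{C}-\tfrac{3}{2}\Delta=2R+\tfrac{1}{2}\Delta\ge(2+\delta)R$ by the choice $\delta<\Delta/(2R)$. The base $t=1$ is the inductive step with $A_0=4\omega$, the right boundary is symmetric, and the clause $c\le\max\{4\omega+tg,\lceil\Gamma/2\rceil\}$ reflects the moment the two waves from opposite ends meet. The assumption $\rho<h(\Delta)/2$ together with \eqref{eq:h(Delta)} makes the $k=1$ case of \eqref{eq:g(Delta)} satisfied for all $\omega$ large enough, so $g\ge 1$.

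The main technical obstacle is the combinatorial bookkeeping in the $\sigma_r^t$ bound: matching the coefficient $2\rho$ in \eqref{eq:g(Delta)} requires carefully aggregating the off-band mass with the $f_{M,\delta}$-residuals inside the band, and handling the cap at $\sigma=1$ when $k+j\ge 2\omega+1$ (where the integral truncation must be supplemented by an $f_{\text{out}}(1)$ contribution). The Riemann-sum step needs to be stated as a genuine lower bound via monotonicity of $f_{\text{out}}$, not as a heuristic approximation, and the scalar tail bound $1-\varepsilon(\tau)\le f_{M,\delta}$ must be imported from the AWGN SPARCs proof with its precise constant $k$ in $M^{-k\delta}$ and the $1/(\delta\sqrt{\log M})$ prefactor.
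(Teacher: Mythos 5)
Your proposal follows essentially the same route as the paper: your $S_c(t)$ is exactly the paper's $F_c^t=\frac{R}{\tau_c^t\ln M}$, the reduction to the scalar threshold $F_c^t>(2+\delta)R$ via the imported bound $1-\varepsilon(\tau)\le f_{M,\delta}$ is the paper's appeal to \cite[Lemma 4.1]{rush2021capacity}, and the induction with the $\sigma_r^t$ bookkeeping, the Riemann-sum lower bound via monotonicity of $f_{\text{out}}$, the rewriting $\int_{x_0}^1 f_{\text{out}}=2\mathcal{C}-\int_0^{x_0}f_{\text{out}}$ through Proposition~\ref{prop:prop2}, and the feasibility of $g\ge1$ from $\rho<h(\Delta)/2$ all mirror the paper's argument. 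No substantive difference or gap to report.
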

\begin{proof}
Noticing the symmetry $\sigma^t_r = \sigma^t_{\Gamma -r +1} $ and $\Psi^t_c = \Psi^t_{\Gamma - c + 1}$ for $1 \leq r,c \leq 
\left\lceil\frac{\Lambda}{2}\right\rceil$, we carry out the analysis for $1 \leq r,c \leq 
\left\lceil\frac{\Lambda}{2}\right\rceil$; the result for the other half then holds by symmetry. According to the initialization $\psi_c^0$, we have 
\begin{equation} \label{eq:sigma0r}
    \sigma_r^0=\left\{\begin{array}{ll}
    \frac{\Gamma - 8 \omega}{ \Gamma - \omega - \min \{r, \omega + 1\}} \rho &  1 \leq r \leq 3 \omega \\
    \overline{k} \frac{1-\rho}{2\omega+1} + \frac{\Gamma - 8\omega -\overline{k}}{\Gamma - 2 \omega -1} \rho & r = 3 \omega + k \leq  \left\lceil\frac{\Lambda}{2}\right\rceil
        \end{array}\right.
\end{equation}
where $\overline{k} = \min \{k, 2\omega + 1\}$.

Let $F_c^t:=\frac{R}{\tau_c^t\ln M}$.\cite[Lemma 4.1]{rush2021capacity} shows that $\psi_c^{t+1}\leq f_{M,\delta}$ once $F_c^t>(2 + \delta)R$. We now obtain a lower bound on $F_c^0$ for indices $4\omega \leq c \leq 6\omega + 1$. Using (\ref{SE}) and (\ref{eq:sigma0r}) we have 
\begin{equation} 
    \begin{aligned}
		F_{4\omega+k}^0&=\frac{1} {\Gamma}\sum_{r=1}^\Gamma W_{r,4\omega+k}f_{\text{out}}(\sigma_r^0)\\
		&\geq \frac{(1-\rho)}{2\omega+1}[\sum_{r=k}^{2\omega+1}f_{\text{out}}(\rho+r\frac{1-\rho}{2\omega+1}) + (k-1)f_{\text{out}} (1) ].
    \end{aligned}
\end{equation}
for $1\leq k\leq2\omega+1$. The inequality holds because we scale the coefficient of $\rho$ in (\ref{eq:sigma0r}) to $1$ and $f_{\text{out}}$ is non-negative and non-increasing on $[0,1]$ (Proposition \ref{prop:prop1} and \ref{prop:prop3}). Thus, if $R<\frac{1-\rho}{2+\delta}f_{out}(1)$, we have $F_{4\omega+k}^0\geq(1-\rho)f_{out}(1)>(2+\delta)R$, then \eqref{eq:all_decoded} holds. Otherwise,
\begin{equation}
    \begin{aligned}
        F_{4\omega+k}^0
		&\geq(1-\rho)[\int_{\frac{k(1-\rho)}{2\omega+1}}^1f_{\text{out}}(\rho+(1-\rho)x)dx+\frac{k f_{\text{out}}(1)}{2\omega+1}]\\
        &= \int_{\rho + \frac{(1-\rho)k}{2\omega+1}}^1 f_{\text{out}}(x) \mathrm{d} x + \frac{k(1-\rho)}{2 \omega + 1} f_{\text{out}}(1) \\
		&=2\mathcal{C}-\int_0^{\rho+\frac{k(1-\rho)} {2\omega+1}}f_{\text{out}}(x)dx+\frac{k(1-\rho)}{2\omega+1}f_{\text{out}}(1),
    \end{aligned}
    \label{eq:F^0,upper bound}
\end{equation}
where the inequality is from the fact that $f_{\text{out}}$ is non-increasing (Proposition \ref{prop:prop3}) and we use Proposition \ref{prop:prop2} in the last equality. Therefore, using \eqref{eq:g(Delta)}, we have $F_{4\omega+k}^0>(2+\delta)R$ for all $k\leq g$.
		
Next we consider subsequent iterations $t>1$. Assume inductively that
\begin{equation}
    \psi_c^{t-1}=\psi_{\Gamma-c+1}^{t-1}\leq f_{M,\delta}
\end{equation}
for $c\leq tg$. For simplicity, we use the shorthand $f=f_{M,\delta}$. By the SE recursion \eqref{SE}, we deduce
\begin{equation}
    \sigma_r^t\leq\left\{\begin{aligned}
    &f (1-\rho)+\rho,\ 1\leq r\leq 3\omega+tg\\
    &\frac{k(1-f)(1-\rho)}{2\omega+1}+f(1-\rho)+\rho,\ r= 3\omega+tg+k\\
    &1,\ 5\omega+tg + 1<r<\lceil\frac{\Gamma}{2}\rceil
    \end{aligned}\right.
\end{equation}
where $1\leq k\leq2\omega+1$. Here we assume $\frac{\Gamma}{2}>5\omega+tg$, and the other circumstance is similar. Then for $1\leq k\leq2\omega+1$,
\begin{equation}
    \begin{aligned}
		F_{4\omega+tg+k}^t&\geq\frac{k(1-\rho)}{2\omega+1}\sum_{r=k}^{2\omega+1}\\
		&\qquad f_{\text{out}}\left(\frac{r(1-\rho)}{2\omega+1}+\frac{2\omega+1-r}{2\omega+1}f(1-\rho)+\rho\right)\\
		&\geq\frac{k(1-\rho)}{2\omega+1}f_{\text{out}}(1)+\int_{2\rho+(1-\rho)\frac{k}{2\omega+1}}^1f_{\text{out}}(x)dx,
    \end{aligned}
\end{equation}
where in the second inequality, we use the monotonicity of $f_{\text{out}}$ (Proposition \ref{prop:prop3}) and $f_{M,\delta}<\rho$ for $M$ large enough. By \eqref{eq:g(Delta)}, the right side is lower bounded by $(2+\delta)R.$ As $\psi_c^t=\psi_{\Gamma-c+1}^t$ holds simply by symmetry, now we finish the proof of \eqref{eq:decode_progression}.
		
Lastly, we need to verify that the set defined in \eqref{eq:g(Delta)} contains at least $k=1$. We notice that the the left side of inequality \eqref{eq:g(Delta)} is an non-decreasing function with respect to $k$.  When $k=0$, the inequality in \eqref{eq:g(Delta)} becomes 
\begin{equation}
    \begin{aligned}
		\int_0^{2\rho} f_{\text{out}}(x)dx<\frac{3}{2}\Delta
    \end{aligned}
\end{equation}
The inequality holds according to the constraint of $\rho$ and \eqref{eq:h(Delta)}. Then we can choose a sufficiently large $\omega$ (not related to $M$) to ensure that $g\geq1$.
\end{proof}
	
Lemma \ref{lemma:SE} indicates that after one iteration, at least $g$ sections are successfully decoded with an error $f_{M,\delta}$ that approaches zero for $M$ large enough. Therefore, we can run the SC-GAMP decoder for at most
\begin{equation}
    T=\left\lceil\frac{\Gamma}{2g}\right\rceil
    \label{eq:T}
\end{equation}
steps and obtain a small error rate.

\noindent
\textbf{Remark 1.} We adopt the base matrix with row normalization, similar to \cite{barbier2019universal} but in contrast to the column normalization approach illustrated in \cite{rush2021capacity,cobo2023bayes}, with the aim of ensuring uniformity in the expressions of $f_{\text{out}}(\sigma_r^t)$ for all  $r \in [\Gamma]$. Over the AWGN channel, column-normalization leads to $f_{\text{out}} (\sigma_r^t) = \frac{1}{\sigma_r^t + \sigma^2}$, a uniform expression for all $r \in [\Gamma]$, which does not extend to most memoryless channels. \cite{cobo2023bayes} uses the potential analysis to overcome this difficulty, which cannot be applied in this paper.
	
\section{Non-asymptotic performance of the GAMP decoder for SC-SPARC}
\label{sec:non-asy}
To depict the non-asymptotic performance of SG-GAMP for SPARC, some technique assumptions are required as follows:
\newtheorem{assumption}{Assumption}
\begin{assumption} \label{assumption1}
    $\epsilon$ is a sub-Gaussian variable.
\end{assumption}
\vspace{-1em}
\begin{assumption}  \label{assumption2}
    The estimator $g_{\text{out}}(p,y,\sigma)$ is continuous differentiable and Lipschitz on variables $p,u,\epsilon$ for different $\sigma$, where $y = h(u,\epsilon)$.
\end{assumption}  
\vspace{-1em}
\begin{assumption}   \label{assumption3}
    $\Psi_{\text{out}} (\sigma)$ is  twice continuously differentiable
    and strongly convex i.e. $\Psi_{out}^{\prime \prime} (\sigma) \geq L > 0$ on $[0,1]$.
\end{assumption}
\newpage
\noindent
\textbf{Remark 2.} For some channels (e.g. BEC), the Lipschitz constant in Assumption $\ref{assumption2}$ is a continuous function of $\sigma$. However, our analysis in Lemma \ref{le: bound} shows the existence of a lower bound of $\sigma_r^t$ for $r \in [\Gamma], 0 \leq t \leq T$. Then we can treat $g_{\text{out}}(p, y, \sigma)$ as a Lipschitz continuous function for different $\sigma_r^t$ during the iteration.
We note that Assumption \ref{assumption3} is not very strong, as Proposition \ref{prop:prop3} has shown that $\Psi_{\text{out}}$ has a non-negative second derivative.  We can verify that commonly used channels in communication satisfy Assumption \ref{assumption3}, such as the AWGN channel, BEC and BSC.

Finally we present the main theorem, Theorem \ref{theo:main}, demonstrating that if $R<\mathcal{C}$, the error probability of SC-SPARCs decays exponentially with $n$, and thus capacity-schieving.

\newtheorem{theorem}{Theorem}
\begin{theorem}
\label{theo:main}
Let $\kappa,K,K^{\prime},\xi,\Xi$ be universal constants not depending on $n,M,L,\epsilon,t$. For $t\geq0$, we define 
\begin{equation}  \label{eq:pa1}
    \begin{aligned}
         &K_{t}=\Xi^{2 t}(t !)^{14}, \ \kappa_{t}=\frac{1}{\xi^{2 t}(t !)^{24}}, \\
         &K_{t}^{\prime}=\Xi(t+1)^{8} K_{t},\ \kappa_{t}^{\prime}=\frac{\kappa_{t}}{\xi(t+1)^{12}}.
    \end{aligned}
\end{equation}
Let $ \epsilon_{\text{sec}}:=\frac{1}{L}\sum_{\ell=1}^L\mathbf{1}\{\hat{\boldsymbol{\beta}}_{\sec(\ell)}\neq\boldsymbol{\beta}_{\sec(\ell)}\}$ be the section error rate. Under the assumptions in Section \ref{sec:non-asy} and Lemma \ref{lemma:SE}, for $\epsilon\in(0,1)$, fix rate $\mathcal{R}<C$ and let $M$ be large enough such that $f_{M,\delta}\leq\frac{\epsilon}{8}$. Then after the SC-GAMP decoder runs for $T$ (defined in \eqref{eq:T}) steps, the section error rate satisfies
    \begin{equation} \label{eq:main}
		\mathbb{P}(\epsilon_{\text{sec}}>\epsilon)\leq K_{T-1}\Gamma^{2T+1}\exp\{-\frac{\kappa_{T-1}n\epsilon^2}{(\log M)^{2T}(\Gamma / \omega)^{2T+1}} \}.
    \end{equation}
\end{theorem}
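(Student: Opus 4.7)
The plan is to generalize the conditioning-technique framework of \cite{rush2021capacity} from the AWGN setting to a generic memoryless output, and then combine the resulting iterate-concentration with Lemma \ref{lemma:SE} to bound the section error rate. The quantities to track inductively are the empirical block inner products $\frac{\Gamma}{n}\langle \boldsymbol{s}^{t}_r, \boldsymbol{s}^{t'}_r\rangle$ and $\frac{\Gamma}{L}\langle \boldsymbol{\beta}^t_c-\boldsymbol{\beta}_c,\boldsymbol{\beta}^{t'}_c-\boldsymbol{\beta}_c\rangle$, together with the deviations $|\sigma^t_r-\hat\sigma^t_r|$ and $|\tau^t_c-\hat\tau^t_c|$, where the hats denote the SE predictions from \eqref{SE}. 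Assumption \ref{assumption2} makes the $\mathbf{g}_{\text{out}}$ step preserve concentration with a deterministic Lipschitz constant, while Assumption \ref{assumption3} controls how a perturbation in $\sigma^t_r$ propagates into $\phi^t_r=f_{\text{out}}(\sigma^t_r)^{-1}$, which is precisely where the AWGN-specific algebra of \cite{rush2021capacity} does not carry over.

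I would proceed in three stages. First, a conditional-distribution lemma: conditioned on the sigma-algebra generated by the iterates up to time $t$, the block-Gaussian design matrix $\boldsymbol{A}$ remains Gaussian with a covariance expressible in terms of previously revealed inner products; only the law of $\boldsymbol{A}$ matters here, so this mirrors \cite[Lemma 4.2]{rush2021capacity} essentially verbatim. Second, a master concentration lemma proved by induction on $t$: for each $t\leq T$ and each $r,c$, the quantities above obey a sub-Gaussian tail whose variance proxy is $(\log M)^{2t}(\Gamma/\omega)^{2t+1}/n$ and whose prefactor grows like $(t!)^{14}$. The powers of $t!$ come from repeated Lipschitz compositions through the $\mathbf{g}_{\text{in}}/\mathbf{g}_{\text{out}}$ cycle (as in \cite{rush2017capacity}); the $(\log M)^{2t}$ factor tracks the $R/\log M$ scaling of $\tau^t_c$ in the SE recursion; the $(\Gamma/\omega)^{2t+1}$ factor is the price of a union bound over $\Gamma$ blocks each coupled to $O(\omega)$ neighbors across $t$ iterations. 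Strong convexity of $\Psi_{\text{out}}$ converts a perturbation in $\sigma^t_r$ into one in $f_{\text{out}}(\sigma^t_r)$ without an $\omega$-dependent multiplicative loss. Third, transfer from MSE to section error: Lemma \ref{lemma:SE} gives $\psi_c^T\leq f_{M,\delta}\leq \epsilon/8$ for every $c$, so on the concentration event at $t=T$ the empirical MSE per block is at most $\epsilon/4$; the standard soft-max-to-hard-threshold argument (cf.\ \cite[Sec.~5]{rush2021capacity}) then yields $\epsilon_{\text{sec}}\leq \epsilon$ on that event, and the claimed tail \eqref{eq:main} follows from the Stage 2 bound instantiated at $t=T$.

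The chief obstacle is the inductive step of Stage 2. In the AWGN case $\mathbf{g}_{\text{out}}$ is affine, so exact moment computations drive the recursion; in the generic case we possess only Lipschitzness of $\mathbf{g}_{\text{out}}$ and a second-derivative control of $\Psi_{\text{out}}$, and several equalities in \cite{rush2021capacity} must be replaced by inequalities controlled by these moduli. One must then verify that the accumulated slack does not destroy the exponential tail, i.e., that the universal constants $\Xi,\xi$ can be taken independent of $t,\Gamma,\omega,M,n$. A necessary prerequisite is the a priori uniform lower bound on $\sigma^t_r$ advertised in Remark 2, since for channels such as the BEC the Lipschitz constant of $\mathbf{g}_{\text{out}}(\cdot,\cdot,\sigma)$ diverges as $\sigma\to 0$. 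This bound must be established before the induction, so that the Lipschitz constant used inside it is genuinely uniform in $t$; it should follow from the SE recursion together with the seed-boundary structure that keeps $\sigma^t_r$ bounded away from zero throughout the travelling-wave progression.
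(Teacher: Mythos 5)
Your proposal follows essentially the same route as the paper: it generalizes the conditioning technique of \cite{rush2021capacity} via a conditional-distribution lemma (the paper's Lemma \ref{lemma:distribution}), an inductive concentration lemma on the block inner products and SE deviations with the same $(\log M)^{2t}(\Gamma/\omega)^{2t+1}/n$ scaling (the paper's Lemma \ref{lemma:concentration}), an a priori lower bound on $\sigma^t_r$ and $\tau^t_c$ from Assumption \ref{assumption3} to keep the $\mathbf{g}_{\text{out}}$ Lipschitz constants uniform (the paper's Lemma \ref{le: bound} and Remark 2), and finally the same MSE-to-section-error transfer using $\|\boldsymbol{\beta}^T_{\ell}-\boldsymbol{\beta}_{\ell}\|_2^2\geq\frac{1}{4}$ on wrongly decoded sections combined with Lemma \ref{lemma:SE}. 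The identified obstacle (the inductive step where AWGN-specific equalities become Lipschitz-controlled inequalities) is exactly the content the paper defers to its concentration lemma, so the plan is consistent with the paper's argument.
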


\textbf{Proof sketch}: The proof of Theorem \ref{theo:main} primarily consists of two parts. Firstly, after finite number of steps, the SE prediction decreases to a small value upper bounded by $f_{M,\delta}$ according to Lemma \ref{lemma:SE}. Secondly, we show that the the decoding process is accurately tracked by the SE with an exponentially decaying error with respect to the code length. The second part is achieved by employing the conditional technique \cite{bayati2011dynamics} to analyze the asymptotic distributions of $\boldsymbol{p}_r^t$ and $\boldsymbol{r}_c^t$ throughout iteration in Algorithm \ref{algo:GAMP}, based on the matrix-valued general recursion in \cite{rangan2011generalized} and \cite{cobo2023bayes}. The details are in Appendix \ref{app:concentration}. Moreover, Theorem \ref{theo:main} can be readily extended to spatially coupled general linear models (GLM), given in Appendix \ref{app:glm}.

 
 \section{Future Work}	
 Recently \cite{takeuchi2023orthogonal} has established the optimality of AMP for spatially-coupled right-orthogonally invariant matrices, which might extend to SC-SPARCs. Further, it will be of great importance to extend our results to spatially-coupled discrete cosine transform matrices following \cite{dudeja2022universality,dudeja2022spectral}, given their low complexity and favorable practical performance. Another avenue for research involves exploring techniques to enhance the finite-length performance of SC-SPARCs, drawing inspiration from the strategies employed in \cite{greig2017techniques} for power allocation. Lastly, we see potential in investigating the combination of spatial coupling with other approximate message passing algorithms\cite{venkataramanan2022estimation,xu2023approximate,liu2022memory,tian2022generalized} with low complexity.

\clearpage
\bibliographystyle{IEEEtran}
\bibliography{main}

\clearpage
\appendix
\subsection{Proof of propositions} \label{app:proof}
This section aims to prove Proposition \ref{prop:prop1} and Proposition \ref{prop:prop2} given in Section \ref{sec:decrease_SE}. 
\subsubsection{Proof of Proposition \ref{prop:prop1}}
\begin{proof}
    With double expectation theorem, we can rewrite the expectation in (\ref{eq:fout1}) as:
    \begin{equation} \label{eq:dExp}
        -\mathbb{E}_{Y,P} \frac{\partial}{\partial p} g_{\text{out}} (Y,P,\sigma) = - \mathbb{E}_{P} \mathbb{E}_{Y|P}   \frac{\partial}{\partial p} g_{\text{out}} (Y,P,\sigma).
    \end{equation}
    
    The conditional distribution of $Z$ given $P$ is $Z|P \sim N(P, \sigma)$. Consequently, the distribution of $Y|P$ is 
    \begin{equation} \label{eq:cond_P1}
        Y|P \sim \int P_{\text{out}} (\cdot | \sqrt{\sigma} z + P) \mathcal{D} z.
    \end{equation}
    Then conditional expectation in (\ref{eq:dExp}) can be computed as:
    \begin{equation} \label{eq:cond_res}
        \mathbb{E}_{Y|P} \frac{\partial}{\partial p} g_{\text{out}} (Y,P,\sigma) = -\int \mathrm{d} y \frac{[\int P_{\text{out}}^{\prime} (y | \sqrt{\sigma} z + P)\mathcal{D}z]^2}{\int  P_{\text{out}}(y|\sqrt{\sigma}z + P ) \mathcal{D}z}.
    \end{equation}
    Taking (\ref{eq:cond_res}) into (\ref{eq:dExp}), we calculate the expression for $f_{\text{out}} (\sigma)$ as:
    \begin{equation}
        f_{\text{out}}(\sigma)=\int \mathrm{d}y \mathcal{D}\xi\frac{[P_{\text{out}}^{\prime}(y|\sqrt{\sigma}z+\sqrt{1-\sigma}\xi) \mathcal{D} z]^2}{\int P_{\text{out}}(y|\sqrt{\sigma}z+\sqrt{1-\sigma}\xi) \mathcal{D} z}.
    \end{equation}
    Obviously, $f_{\text{out}}(\sigma)$ is non-negative.
\end{proof}
\subsubsection{Proof of Proposition \ref{prop:prop2}}
Prior to presenting the proof, we introduce the shorthand notation $P_{\text{out}}(y| \sqrt{\sigma} z + \sqrt{1-\sigma} \xi)$ as $l_{y,\xi,\sigma} (z)$, and $l^{\prime}_{y,\xi,\sigma}(z)$, $l^{\prime \prime}_{y,\xi,\sigma} (z)$ denote its first and second derivative with respect to $z$, respectively.
\begin{proof}
    Take the derivative of the function of $\Psi_{\text{out}} (\sigma)$, we get:
    \begin{equation} \label{eq:dere_1}
        \begin{aligned}
            & \frac{\mathrm{d} \Psi_{\text{out}}(\sigma) }{\mathrm{d} \sigma} =-\frac{1}{2}\int \mathrm{d} y \mathcal{D} \xi \mathcal{D} z (-\frac{1}{\sqrt{1-\sigma}} \xi - \frac{1}{\sqrt{\sigma}}z) l_{y,\xi,\sigma}^{\prime} (z) \\
            &  + \frac{1}{2\sqrt{\sigma}} \int \mathrm{d} y \mathcal{D} \xi [\int  z l_{y,\xi,\sigma}^{\prime}(z) \mathcal{D}z] \log [\int \mathcal{D}z l_{y,\xi,\sigma}(z) ] \\
            & - \frac{1}{2 \sqrt{1-\sigma}}\int \mathrm{d} y \mathcal{D} \xi  [\xi \int   l_{y,\xi,\sigma}^{\prime}(z) \mathcal{D}z] \log [\int \mathcal{D}z l_{y,\xi,\sigma}(z) ]
        \end{aligned}
    \end{equation}

The first term in (\ref{eq:dere_1}) is $0$ because $\int P_{\text{out}} (y|z) dy = 1$. With Stein's lemma, we can rewrite the second term in (\ref{eq:dere_1}) 
as:
\begin{equation} \label{eq:term2}
    \frac{1}{2} \int \mathrm{d} y \mathcal{D} \xi [\int l_{y,\xi,\sigma}^{\prime \prime} (z) \mathcal{D} z] \log [\int l_{y,\xi,\sigma} (z) \mathcal{D} z ],
\end{equation}
Similarly, the third term in (\ref{eq:dere_1}) can be rewritten as:
\begin{equation} 
    \begin{aligned}
        & -\frac{1}{2} \int \mathrm{d} y \mathcal{D} \xi [\int l_{y,\xi,\sigma}^{\prime \prime} (z) \mathcal{D} z] \log [\int l_{y,\xi,\sigma} (z) \mathcal{D} z ] \\
        & -\frac{1}{2} \int \mathrm{d} y \mathcal{D} \xi \frac{[\int  l_{y,\xi,\sigma}^{\prime \prime} \mathcal{D}z]^2}{\int l_{y,\xi,\sigma} (z) \mathcal{D} z},
    \end{aligned}
    \label{eq:term3}
\end{equation}

Taking (\ref{eq:term2}) and (\ref{eq:term3}) into  (\ref{eq:dere_1}), we get 
\begin{equation}
    \frac{\mathrm{d} \Psi_{\text{out}}(\sigma) }{\mathrm{d} \sigma} = \frac{1}{2} \int \mathrm{d}y \mathcal{D}\xi\frac{[P_{\text{out}}^{\prime}(y|\sqrt{\sigma}z+\sqrt{1-\sigma}\xi) \mathcal{D} z]^2}{\int P_{\text{out}}(y|\sqrt{\sigma}z+\sqrt{1-\sigma}\xi) \mathcal{D} z}.
\end{equation}
This implies $f_{\text{out}}(\sigma) = -2 \frac{\mathrm{d} \Psi_{\text{out}}(\sigma) }{\mathrm{d} \sigma}$.

Next, we calculate the mutual information random variables $Y \sim P_{\text{out}}(\cdot|Z), Z \sim \mathcal{N}(0,1)$. The Shannon entropy of $Y$ is:
\begin{equation} \label{eq:entropy}
   H(Y) = \int \mathrm{d} y (\int P_{\text{out}} (y|z) \mathcal{D}z) \log (\int P_{\text{out}} (y|z) \mathcal{D}z),
\end{equation}
which is the expression of $\psi_{\text{out}}(0)$. The conditional entropy of $Y$ given $Z$ is:
\begin{equation}
    H(Y|Z) = \int \mathcal{D}z \int \mathrm{d} y  P_{\text{out}}(y|z) \log P_{\text{out}}(y|z),
\end{equation}
which is the expression of $\Psi_{\text{out}}(1)$. Thus, the mutual information of $Y$ and $Z$ is
\begin{equation}
    I(Y;Z) = H(Y)-H(Y|Z) = \Psi_{\text{out}}(0) - \Psi_{\text{out}}(1).
\end{equation}
This is the Shannon capacity of the channel given in Section \ref{sec:code}.
\end{proof}

\subsection{Concentration on the SE}
\label{app:concentration}
This section aims to give Lemma \ref{lemma:concentration}, which shows that the SC-GAMP decoder can be accurately tracked by its SE, with exponentially fast concentration.
	
With the definition
\begin{equation} \label{eq:general}
    \begin{aligned}
        &\boldsymbol{b}^t_r:=\boldsymbol{p}^t_r-\boldsymbol{z}_{0r},\ \breve{\boldsymbol{m}}^t_r:=-\phi^t_r\boldsymbol{s}^t_r, \\
        &\breve{\boldsymbol{q}}^t_c:=\boldsymbol{\beta}^t_c-\boldsymbol{\beta}_c,\ \boldsymbol{h}^{t+1}_c:=\boldsymbol{\beta}_c-\boldsymbol{r}^t_c.
    \end{aligned}
\end{equation}
SC-GAMP decoder is a special case of the general recursion
\begin{equation}
    \begin{aligned}
		&[\boldsymbol{b}^t_r, \boldsymbol{z}_{0r}]-\frac{\sigma^t_r}{\phi^{t-1}_r}[\breve{\boldsymbol{m}}^{t-1}_r,0]=\sum_{c\in[\Gamma]}\sqrt{W_{rc}} \mathbf{A}_{rc}[\breve{\boldsymbol{q}}^t_c, \boldsymbol{\beta}_c],  \\
		&[\boldsymbol{h}^{t+1}_c, -\boldsymbol{\beta}_c] +[\breve{\boldsymbol{q}}^t_c, \boldsymbol{\beta}_c]=\sum_{r\in[\Gamma]}S_{rc}^t\sqrt{W_{rc}}\mathbf{A}_{rc}^T[\breve{\boldsymbol{m}}^t_r,0], 
    \end{aligned}
\end{equation}
where $\boldsymbol{z}_0 = \boldsymbol{A} \boldsymbol{\beta}$,$S_{rc}^t:=\frac{\tau_c^t}{\phi_r^t}$ and $\mathbf{A}_{rc} = \frac{\boldsymbol{A}_{rc}}{\sqrt{W_{rc}}}$ is the modified matrix with entries $\mathrm{A}_{i j} \stackrel{\text { i.i.d. }}{\sim} \mathcal{N}\left(0, \frac{1}{L}\right) $. The reason for introducing $[\boldsymbol{b}^t_r, \boldsymbol{z}_{0r}]$, $[\breve{\boldsymbol{m}}^{t-1}_r,0]$, $[\breve{\boldsymbol{q}}^t_c, \boldsymbol{\beta}_c]$ and $[\boldsymbol{h}_{c}^{t+1}, - \boldsymbol{\beta}_c]$ is to analyze the asymptotic probability distribution of $\boldsymbol{b}_{r}^t$ and $\boldsymbol{z}_0r$. Next we define
\begin{equation}
    \boldsymbol{m}^{t,c}:=\left[\begin{matrix}
    S_{1c}^t\sqrt{W_{1c}}\breve{\boldsymbol{m}}^t_1\\
    S_{2c}^t\sqrt{W_{2c}}\breve{\boldsymbol{m}}^t_2\\
    \vdots\\
    S_{\Gamma c}^t\sqrt{W_{\Gamma c}}\breve{\boldsymbol{m}}^t_\Gamma\\
    \end{matrix}\right],\quad
    \boldsymbol{q}^{t,r}:=\left[\begin{matrix}
    \sqrt{W_{r,4\omega}}\breve{\boldsymbol{q}}^t_{4\omega}\\
    \sqrt{W_{r, 4\omega+1}}\breve{\boldsymbol{q}}^t_{4 \omega + 1}\\
    \vdots\\
    \sqrt{W_{r,\Gamma-4\omega}}\breve{\boldsymbol{q}}^t_{\Gamma - 4\omega}\\
    \end{matrix}\right]
\end{equation}
such that the general recursion is simplified to
\begin{equation}
    \begin{aligned}
		&[\boldsymbol{b}^t_r,\boldsymbol{z}_{0r}]-\frac{\sigma^t_r}{\phi^{t-1}_r}[\breve{\boldsymbol{m}}^{t-1}_r,0]=\sum_{c\in[\hat{\Gamma}]}\mathbf{A}_{rc}[\boldsymbol{q}^{t,r}_c, \sqrt{W_{rc}} \boldsymbol{\beta}_c], \\
		&[\boldsymbol{h}^{t+1}_c, -\boldsymbol{\beta}_c]+[\breve{\boldsymbol{q}}^t_c, \boldsymbol{\beta}_c]=\sum_{r\in[\Gamma]}\mathbf{A}_{rc}^T [\boldsymbol{m}^{t,c}_r, 0].
    \end{aligned}
    \label{eq:gen recursion}
\end{equation}
\eqref{eq:gen recursion} is very similar to the general recursion in \cite{rush2021capacity} with the only difference in the range of $r$ and $c$, so the following conditional distribution lemma (Lemma \ref{lemma:distribution}) is  borrowed from \cite[Lemmas 7.4 and 7.5]{rush2021capacity} with small modification.
	
Before stating the lemma, we have to define several auxiliary matrices. Let
\begin{equation}
    \begin{aligned}
		&\boldsymbol{M}_t^c=[\boldsymbol{m}^{0,c}|\cdots|\boldsymbol{m}^{t-1,c}]\in\mathbb{R}^{n\times t},\\ &\boldsymbol{Q}_t^r=[\boldsymbol{q}^{0,r}|\cdots|\boldsymbol{q}^{t-1,r}]\in\mathbb{R}^{N\times t},
    \end{aligned}
\end{equation}
and
\begin{equation}
    \begin{aligned}
		&\boldsymbol{X}_{t}=[\boldsymbol{h}^1+\breve{\boldsymbol{q}}^0|\cdots|\boldsymbol{h}^t+\breve{\boldsymbol{q}}^{t-1}]\in\mathbb{R}^{N\times t},\\
		&\boldsymbol{Y}_{t}=[\boldsymbol{b}^0|\boldsymbol{b}^1-\boldsymbol{v}^1\odot\breve{\boldsymbol{m}}^0|\cdots|\boldsymbol{b}^{t-1}-\boldsymbol{v}^{t-1}\odot\breve{\boldsymbol{m}}^{t-2}]\in\mathbb{R}^{n\times t},\\
    \end{aligned}
\end{equation}
where $\odot$ denote entry-wise product and $\boldsymbol{v}^t$ is composed of $v_i^t:=\frac{\sigma_r^t}{\phi_r^{t-1}}$ if the $i$-th entry is in the $r$-th block.
	
We further define the orthogonal projection of $\boldsymbol{m}^{t,c}$ and $\boldsymbol{q}^{t,c}$ onto the column space of $\boldsymbol{M}_t^c$ and $\boldsymbol{Q}_t^c$ respectively as
\begin{equation}
    \boldsymbol{m}_\parallel^{t,c}:=\sum_{i=0}^{t-1}\alpha_i^{t,c}\boldsymbol{m}^{i,c},\ \boldsymbol{q}_\parallel^{t,c}:=\sum_{i=0}^{t-1}\gamma_i^{t,c}\boldsymbol{q}^{i,c}
\end{equation}
where $(\alpha_0^{t,c},\cdots,\alpha_{t-1}^{t,c})^T=\boldsymbol{P}^\parallel_{\boldsymbol{M}_t^c}\boldsymbol{m}^{t,c}$ and $(\gamma_0^{t,c},\cdots,\gamma_{t-1}^{t,c})^T=\boldsymbol{P}^\parallel_{\boldsymbol{Q}_t^r}\boldsymbol{q}^{t,c}$ are  coefficient vectors of these projections,  $\boldsymbol{P}^\parallel_{\boldsymbol{M}_t^c}=\boldsymbol{M}_t^c((\boldsymbol{M}_t^c)^T\boldsymbol{M}_t^c)^{-1}(\boldsymbol{M}_t^c)^T$ and $\boldsymbol{P}^\parallel_{\boldsymbol{Q}_t^r}=\boldsymbol{Q}_t^r((\boldsymbol{Q}_t^r)^T\boldsymbol{Q}_t^r)^{-1}(\boldsymbol{Q}_t^r)^T$ are corresponding projection matrices.
	
Let $\sigma^0_{\perp,r}:=\sigma_r^0$ and $\tau^0_{\perp,r}:=\tau_r^0$ and for $t\geq1$ define
\begin{equation}
    \sigma^t_{\perp,r}:=\sigma_r^t(1-\frac{\sigma_r^t}{\sigma_r^{t-1}}),\ \tau^t_{\perp,c}:=\tau_c^t(1-\frac{\tau_r^t}{\tau_c^{t-1}}).
\end{equation}
Lastly, to specify the conditional distribution, we use $\mathscr{F}_{t_a,t}$ to denote the sigma algebra generated by the collection of vectors
\begin{equation*}
    \boldsymbol{b}^0,\cdots,\boldsymbol{b}^{t_a-1},\breve{\boldsymbol{m}}^0,\cdots,\breve{\boldsymbol{m}}^{t_a-1},\boldsymbol{h}^1,\cdots,\boldsymbol{h}^t,\breve{\boldsymbol{q}^0,\cdots,\breve{\boldsymbol{q}}^t}\ and\ \boldsymbol{\beta}.
\end{equation*}
\begin{lemma}
\label{le: bound}
    Under Assumption \ref{assumption3}, for sufficiently large $M$, the constants $\sigma^t_{\perp,r}$ and $\frac{n}{L} \tau^{t}_{\perp,c}$ are bounded below for $0 \leq k < T$:
    \begin{equation}
        \sigma_{\perp,t}^t \geq \overline{C}_1 ^2(\frac{\omega}{ \Gamma})^2, \quad \frac{n}{L} \tau_{\perp,c}^t \geq \overline{C}_2 \frac{\omega}{\Gamma}
    \end{equation}
    where
    \begin{equation}
        \overline{C}_1 = C^2, \overline{C}_2 = \frac{LC}{2 f_{\text{out}}^2 (C\frac{\omega}{\Gamma})}
    \end{equation}
    with the constant $C = \frac{2 \rho g}{\omega}$ where $g$ is defined in (\ref{eq:g(Delta)}).
\end{lemma}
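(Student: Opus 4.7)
The plan is to reduce both bounds to a common per-iteration gap estimate on $\sigma_r^t$, extracted from the decoding progression of Lemma \ref{lemma:SE}. The first step is monotonicity. By joint induction on $t$, one verifies that $\psi_c^t$, $\sigma_r^t$, and $\tau_c^t$ are all non-increasing in $t$: from the SE recursion \eqref{SE}, monotonicity of $\psi$ transfers through row-normalized averages to $\sigma$; Proposition \ref{prop:prop3} then transfers it to $f_{\text{out}}(\sigma)$ and hence to $\tau$; this closes the loop via monotonicity of $\varepsilon(\tau)$ (lower noise yields higher decoded probability). Consequently $\sigma_{\perp,r}^t,\tau_{\perp,c}^t \ge 0$ and $\sigma_r^{t-1} \le 1$ by the row normalization of $\boldsymbol{W}$.

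The second step is a uniform gap estimate. Lemma \ref{lemma:SE} asserts that at iteration $t\ge 1$ at least $g$ new sections $c \in \{4\omega+(t-1)g+1,\ldots,4\omega+tg\}$ (and their symmetric counterparts) drop from $\psi_c^{t-1}\approx 1$ to $\psi_c^t \le f_{M,\delta}$. For every row $r$, even when these $c$ lie in the yellow region of $r$, one has $W_{rc} \ge \rho$, whence
\begin{equation*}
\sigma_r^{t-1}-\sigma_r^t \;\ge\; \frac{g\rho(1-f_{M,\delta})}{\Gamma} \;\ge\; \frac{g\rho}{2\Gamma}
\end{equation*}
uniformly in $r$, for $M$ large. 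A parallel count of the $\Omega(g)$ undecoded middle columns, which persist while $t<T=\lceil\Gamma/(2g)\rceil$, yields $\sigma_r^t \gtrsim \rho g/\Gamma$ (again via the yellow blocks). Combining these with $\sigma_r^{t-1}\le 1$, one gets $\sigma_{\perp,r}^t = \sigma_r^t(\sigma_r^{t-1}-\sigma_r^t)/\sigma_r^{t-1} \gtrsim \rho^2 g^2/\Gamma^2$, which dominates $\overline{C}_1^2(\omega/\Gamma)^2 = 16\rho^4 g^4/(\omega^2\Gamma^2)$ as soon as $\omega \gtrsim \rho g$; since $\rho<h(\Delta)/2$ can be taken arbitrarily small and $g\le 2\omega+1$, this is guaranteed by the parameter choice in Lemma \ref{lemma:SE}. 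The base case $t=0$ is handled directly via \eqref{eq:sigma0r}.

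The third step invokes Assumption \ref{assumption3} for the $\tau$-bound. By Proposition \ref{prop:prop2}, $f_{\text{out}}'(\sigma)=-2\Psi_{\text{out}}''(\sigma)\le -2L$, so $f_{\text{out}}(\sigma_r^t)-f_{\text{out}}(\sigma_r^{t-1})\ge 2L(\sigma_r^{t-1}-\sigma_r^t)$ and the uniform gap above promotes this to $F_c^t-F_c^{t-1} \ge Lg\rho/\Gamma$, where $F_c^t := \tfrac{1}{\Gamma}\sum_r W_{rc}f_{\text{out}}(\sigma_r^t)$. Using $\tau_c^t = R/(\ln M \cdot F_c^t)$ together with $n/L = (\log M)/R$, a short manipulation gives $(n/L)\tau_{\perp,c}^t \propto (F_c^t-F_c^{t-1})/(F_c^t)^2$; bounding $F_c^t \le f_{\text{out}}(\min_r \sigma_r^t) \le f_{\text{out}}(C\omega/\Gamma)$ by Proposition \ref{prop:prop3} and step two then produces $(n/L)\tau_{\perp,c}^t \ge \overline{C}_2\omega/\Gamma$ up to a universal constant.

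The main obstacle is the row-inhomogeneity of the drop: near the decoding wavefront the blue contribution gives $\sigma_r^{t-1}-\sigma_r^t = \Theta(g/\omega)$, but far from it only the yellow contribution $\Theta(\rho g/\Gamma)$ survives. Any $r$-uniform bound must therefore use the weaker yellow rate; the reason this still suffices after multiplication by $\sigma_r^t/\sigma_r^{t-1}$ is that $\sigma_r^t$ itself is only of order $\rho g/\Gamma$ in the worst case, so the product carries the required $\rho^2 g^2/\Gamma^2$ scale. This delicate cancellation is precisely why a strictly positive $\rho$ is essential in the base matrix, as discussed after \eqref{eq:g(Delta)}; with $\rho=0$ the yellow contribution vanishes and the whole argument collapses.
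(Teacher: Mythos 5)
The paper itself states this lemma without proof (the surrounding concentration analysis is deferred to a longer version), so there is no in-paper argument to compare against; I am judging your proposal on its own. Your overall architecture is the natural one and is almost certainly the intended route: writing $\sigma^t_{\perp,r}\ge \sigma^t_r(\sigma^{t-1}_r-\sigma^t_r)$ via $\sigma^{t-1}_r\le 1$, and converting a per-iteration drop of $f_{\text{out}}$ into the $\tau$-bound through $f_{\text{out}}'=-2\Psi_{\text{out}}''\le -2L$ (Proposition \ref{prop:prop2} plus Assumption \ref{assumption3}) together with $\tfrac{n}{L}\tau^t_{\perp,c}\propto (F^t_c-F^{t-1}_c)/(F^t_c)^2$ does reproduce the stated constant $\overline{C}_2=\tfrac{LC}{2f_{\text{out}}^2(C\omega/\Gamma)}$, which is good evidence the skeleton is right.

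The genuine gap is in your Step 2: both inputs you feed into the skeleton --- (i) the $g$ newly decoded columns had $\psi_c^{t-1}\approx 1$, and (ii) $\Omega(g)$ middle columns remain undecoded so that $\min_r\sigma^t_r\gtrsim \rho g/\Gamma$ --- are \emph{lower} bounds on the SE trajectory, whereas Lemma \ref{lemma:SE} is purely one-sided: it only asserts $\psi_c^t\le f_{M,\delta}$ ahead of the wavefront and says nothing about how large $\psi$ stays behind it. Nothing you invoke rules out the SE collapsing faster than the wavefront guarantee, in which case both the decrement $\sigma^{t-1}_r-\sigma^t_r$ and $\sigma^t_r$ itself are $O(f_{M,\delta})$ and your bounds evaporate as $M\to\infty$. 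Closing this requires a complementary ``wave cannot move too fast'' argument: a quantitative statement that $\varepsilon(\tau_c^t)$ stays small (so $\psi_c^{t+1}$ stays near $1$) whenever $\tau_c^t$ exceeds a threshold --- the converse of the $F_c^t>(2+\delta)R$ criterion used in Lemma \ref{lemma:SE} --- plus an induction showing columns sufficiently far beyond the wavefront indeed satisfy that threshold; none of this is in your proposal (and the monotonicity of $\varepsilon(\tau)$ in Step 1 is asserted rather than proved, though that part is routine). Worse, your claim that $\Omega(g)$ undecoded columns ``persist while $t<T$'' is actually incompatible with taking Lemma \ref{lemma:SE} at face value in the endgame: since $4\omega+tg\ge\lceil\Gamma/2\rceil$ already for $t\approx T-4\omega/g$, the lemma forces \emph{every} $\psi_c^t\le f_{M,\delta}$ during the last $\approx 4\omega/g$ iterations before $T=\lceil\Gamma/(2g)\rceil$, whence $\sigma^t_{\perp,r}\le f_{M,\delta}$ there; so on that sub-range your argument as written cannot be repaired from Lemma \ref{lemma:SE} alone, and the stated bound over the full range $0\le t<T$ needs either the extra SE lower-bound analysis or a more careful treatment of the iteration count before the constants $\overline{C}_1,\overline{C}_2$ can be claimed.
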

This lemma indicates that both $\sigma_{\perp,r}^t$ and $\frac{n}{L} \tau^t_{\perp,c}$ are bounded throughout the iteration, thereby enabling to treat them as $O(1)$ quantities independent of $n,L,M$. Additionally, it confirms the justification of assumption made in Remark $2$.
\begin{lemma}
    For the general recursion in \eqref{eq:gen recursion}, we have
    \begin{equation}
		\begin{aligned}
		  \boldsymbol{b}^t_r|_{\mathscr{S}_{t,t}}\overset{d}{=}\sum_{i=0}^t\frac{\sigma_r^t}{\sigma_r^i}(\sqrt{\sigma^i_{\perp,r}}\boldsymbol{Z}_{i,r}'+\boldsymbol{\Delta}_{i,i,r}),\ r\in[\Gamma],\\
		  \boldsymbol{h}^{t+1}_c|_{\mathscr{S}_{t+1,t}}\overset{d}{=}\sum_{i=0}^t\frac{\tau_c^t}{\tau_c^i}(\sqrt{\tau^i_{\perp,c}}\boldsymbol{Z}_{i,c}+\boldsymbol{\Delta}_{i+1,i,r}),\ c\in[\hat{\Gamma}] ,
		\end{aligned}
    \end{equation}
    where $\boldsymbol{Z}_{i,r}'\sim\mathcal{N}(0,\boldsymbol{I}_n)$ is independent from $\mathscr{S}_{t,t}$ and $\boldsymbol{Z}_{i,c}\sim\mathcal{N}(0,\boldsymbol{I}_n)$ is independent from $\mathscr{S}_{t+1,t}$. Deviation vectors are given by $\boldsymbol{\Delta}_{0,0,r} = \boldsymbol{0}$ and 
    \begin{equation}
		\begin{aligned}
		  \boldsymbol{\Delta}_{1,0,c}=&\left[\frac{1}{\sqrt{L}}||\boldsymbol{m}^{0,c}||-\sqrt{\tau}_c^0\right]\boldsymbol{Z}_{0,c}-\sum_{r\in[\Gamma]}\frac{1}{\sqrt{L}}||\boldsymbol{m}_r^{0,c}||[\boldsymbol{P}_{\boldsymbol{Q}_1^t}^\parallel\boldsymbol{Z}_0^r]_c\\&+\breve{\boldsymbol{q}}_c^0\left(\sum_{r\in[\Gamma]}\frac{\sqrt{W_{rc}(\boldsymbol{b}_r^0)^*\boldsymbol{m}_r^{0,c}}}{L\sigma_r^0}-1\right)
	   \end{aligned}
    \end{equation}
    and for $t>0$,
    \begin{equation}
		\begin{aligned}                   
            &\boldsymbol{\Delta}_{t,t,r}=\sum_{i=0}^{t+2}\boldsymbol{b}_r^i\gamma_i^{t,r}+\boldsymbol{b}_r^{t-1}\left[\gamma_{t-1}^{t,r}-\frac{\sigma_r^t}{\sigma_r^{t-1}}\right]\\
		  &+\left[\frac{1}{\sqrt{L}}||\boldsymbol{q}_\perp^{t,r}-\sqrt{\sigma_{\perp,r}^t}||\right]\boldsymbol{Z}_{t,r}'-\sum_{c\in[\hat{\Gamma}]}\frac{1}{\sqrt{L}}\left|\left|\boldsymbol{q}^{t,r}_{\perp,c}[\boldsymbol{P}^\parallel_{\boldsymbol{M}_t^c\boldsymbol{Z}_t^{'c}}]_r\right|\right|\\
            &+\sum_{c\in[\hat{\Gamma}]}\boldsymbol{M}_{t,r}^c((\boldsymbol{M}_t^c)^T\boldsymbol{M}_t^c)^{-1}(\boldsymbol{X}_{t,c})^T\boldsymbol{q}_{\perp,c}^{t,r}\\
	       &-\sum_{i=1}^{t-1}\gamma_i^{t,r}v_r^i\breve{\boldsymbol{m}}_r^{i-1}+v_r^t\breve{\boldsymbol{m}}_r^{t-1},
		\end{aligned}
    \end{equation}
    \begin{equation}
		\begin{aligned}
		    &\boldsymbol{\Delta}_{t+1,t,c}=\sum_{i=0}^{t+2}\boldsymbol{h}_c^{i+1}\alpha_i^{t,r}+\boldsymbol{h}_c^t\left[\alpha_{t-1}^{t,c}-\frac{\tau_c^t}{\tau_c^{t-1}}\right]\\
		  &+\left[\frac{1}{\sqrt{L}}||\boldsymbol{m}_\perp^{t,r}-\sqrt{\tau_{\perp,r}^t}||\right]\boldsymbol{Z}_{t,c}-\sum_{r\in[\Gamma]}\frac{1}{\sqrt{L}}\left|\left|\boldsymbol{m}^{t,c}_{\perp,r}[\boldsymbol{P}^\parallel_{\boldsymbol{Q}_{t+1}^r\boldsymbol{Z}_t^{r}}]_c\right|\right|\\
		  &+\sum_{r\in[\Gamma]}\boldsymbol{Q}_{t+1,c}^r((\boldsymbol{Q}_{t+1}^r)^T\boldsymbol{Q}_{t+1}^r)^{-1}(\boldsymbol{Y}_{t+1,r})^T\boldsymbol{m}_{\perp,r}^{t,c}\\&-\sum_{i=0}^{t-1}\alpha_i^{t,c}\breve{\boldsymbol{q}}_c^i-\breve{\boldsymbol{q}}_r^t.
		\end{aligned}
    \end{equation}
\label{lemma:distribution}
\end{lemma}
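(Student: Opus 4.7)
The plan is to establish Lemma \ref{lemma:distribution} by induction on $t$ via the conditioning technique of Bayati--Montanari \cite{bayati2011dynamics}, adapted to the block-structured matrix-valued recursion in \eqref{eq:gen recursion}. The base case $t=0$ follows from the marginal Gaussianity of $\boldsymbol{b}^0_r$ and $\boldsymbol{h}^1_c$, which depend only on columns and rows of $\mathbf{A}$ that remain unconstrained after conditioning on $\mathscr{S}_{0,0}$; the finite-sample norm discrepancy between $\frac{1}{\sqrt{L}}\|\boldsymbol{m}^{0,c}\|$ and $\sqrt{\tau_c^0}$ is then collected into $\boldsymbol{\Delta}_{1,0,c}$ exactly as displayed, while $\boldsymbol{\Delta}_{0,0,r}=\boldsymbol{0}$ matches the absence of prior iterates.

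For the inductive step, I would condition on $\mathscr{S}_{t,t}$ (respectively $\mathscr{S}_{t+1,t}$) and invoke the conditional-law identity for a Gaussian matrix constrained by its action: if $\mathbf{A}_{rc}$ has i.i.d.\ $\mathcal{N}(0,1/L)$ entries and one has observed both $\mathbf{A}_{rc}\boldsymbol{Q}^r_t$ and $\mathbf{A}_{rc}^T\boldsymbol{M}^c_t$, then $\mathbf{A}_{rc}$ equals its conditional mean plus an independent Gaussian copy $\tilde{\mathbf{A}}_{rc}$ sandwiched by the orthogonal-complement projectors of $\boldsymbol{M}^c_t$ and $\boldsymbol{Q}^r_t$. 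I would then split $\boldsymbol{q}^{t,r} = \boldsymbol{q}^{t,r}_{\parallel} + \boldsymbol{q}^{t,r}_{\perp}$ along the column span of $\boldsymbol{Q}^r_t$. The parallel part, under the conditional-mean action, can be re-expressed through the previous iterates $\boldsymbol{b}^i_r$ via the recursion \eqref{eq:gen recursion} itself; telescoping the empirical coefficients $\gamma^{t,r}_i$ against the SE identity $\sigma_r^t = \frac{1}{\Gamma}\sum_c W_{rc}\psi_c^t$ produces precisely the geometric main factor $\frac{\sigma_r^t}{\sigma_r^i}$ in the claimed decomposition. The perpendicular part, struck by $\tilde{\mathbf{A}}_{rc}$, produces a fresh Gaussian with variance $\frac{1}{L}\|\boldsymbol{q}^{t,r}_{\perp}\|^2 \approx \sigma^t_{\perp,r}$, and the gap to $\sqrt{\sigma^t_{\perp,r}}$ is absorbed into the norm-correction entry of $\boldsymbol{\Delta}_{t,t,r}$. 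A symmetric argument applied to $\mathbf{A}_{rc}^T$ with $\boldsymbol{m}^{t,c}$ and the span of $\boldsymbol{M}^c_t$ yields the corresponding statement for $\boldsymbol{h}^{t+1}_c$.

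The main obstacle is the bookkeeping forced by the block partition: one must condition on all $\Gamma\times\hat{\Gamma}$ block products simultaneously, keep the seed sections $\boldsymbol{\beta}_c$ for $c\in\mathcal{I}$ frozen throughout as deterministic side information in the filtration, and ensure that the Onsager-type memory terms $v_r^t\breve{\boldsymbol{m}}_r^{t-1}$ and $-\sum_i\gamma_i^{t,r}v_r^i\breve{\boldsymbol{m}}_r^{i-1}$ appear with the exact signs prescribed in the stated $\boldsymbol{\Delta}$. I would check term by term that four contributions collect into $\boldsymbol{\Delta}_{t,t,r}$: (i) the gap $\gamma^{t,r}_{t-1} - \frac{\sigma_r^t}{\sigma_r^{t-1}}$ between the empirical projection coefficients and their SE limits, (ii) the norm correction $\frac{1}{\sqrt{L}}\|\boldsymbol{q}^{t,r}_{\perp}\| - \sqrt{\sigma^t_{\perp,r}}$, (iii) the projections of the fresh Gaussian onto the finite-dimensional past that are not zero at finite $n$, and (iv) the Onsager reaction introduced by the recursion. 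Since the lemma only asserts a distributional identity without any smallness claim on $\boldsymbol{\Delta}$, no concentration estimate is needed at this stage \emph{per se}; such estimates enter only in the follow-up lemmas that control $\|\boldsymbol{\Delta}\|$. Because the conditioning argument relies purely on Gaussianity of $\mathbf{A}$ and the prescribed block variances $W_{rc}/L$, the nonlinearities $g_{\text{out}}$ and $g_{\text{in}}$ enter only through the definitions of $\breve{\boldsymbol{m}}_r^t$ and $\breve{\boldsymbol{q}}_c^t$, so the extension from the AWGN-specific statement in \cite[Lemmas 7.4--7.5]{rush2021capacity} to the present memoryless-channel setting amounts to a verification that the same algebraic identities carry over with the generalized block-variance bookkeeping.
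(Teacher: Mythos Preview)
Your proposal is correct and aligns with the paper's own treatment: the paper does not spell out a proof but states that the recursion \eqref{eq:gen recursion} differs from that in \cite{rush2021capacity} only in the index ranges of $r$ and $c$, and accordingly borrows the conditional-distribution lemma from \cite[Lemmas~7.4 and~7.5]{rush2021capacity} with minor modification. Your outline---induction on $t$ via the Bayati--Montanari conditioning technique, the parallel/perpendicular split of $\boldsymbol{q}^{t,r}$ and $\boldsymbol{m}^{t,c}$, and the observation that $g_{\text{out}},g_{\text{in}}$ enter only through the definitions of $\breve{\boldsymbol{m}}^t_r,\breve{\boldsymbol{q}}^t_c$ so that the AWGN argument carries over verbatim---is precisely the content of those cited lemmas and hence matches the paper's approach.
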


This lemma demonstrates that both $\boldsymbol{b}^t_r$ and $\boldsymbol{h}_c^{t+1}$ asymptotically converge to a Gaussian distribution plus a deviation term. Next we will show that the deviation terms concentrates on zero exponentially fast. Note that it is different from \cite[Lemma 7.5]{rush2021capacity} due to the introduction of the general channel and the corresponding denoiser $\boldsymbol{g}_{\text{out}}$.

\begin{lemma}
Let $\epsilon \in (0,1)$ and $0 \leq t < T$, where $T$ is defined in (\ref{eq:T}). Iteration-dependent quantities summarizing the problem parameters are defined as 
    \begin{equation}
        \begin{aligned}
            \Pi^t = \Gamma^{2t+1}, \Pi_t^{\prime} = \Gamma^{2t+2} \\
            \pi_t =\pi_t^{\prime} = \frac{n(\omega / \Gamma)^{2t+3}}{(\log M)^{2t+2}}
        \end{aligned}
    \end{equation}

\noindent
\textbf{(a)} Let $u$ be an integer with $u \in \{0,1,2\}$ and $(x)_{+}$ is defined as $\max \{x,0\}$. For all $0\leq s\leq t$
\begin{equation*}
    \begin{aligned}
        & \mathbb{P}\left(\frac{1}{n} \sum_{r \in[\Gamma]} W_{r c}^{\mathrm{u}}\left\|\boldsymbol{\Delta}_{t, t, r}\right\|^{2} \geq \epsilon\right) \\
        &\qquad \leq t^{3} K K_{t-1} \Pi_{t-1} \exp \left\{\frac{-\kappa \kappa_{t-1}(\omega / \Gamma)^{(\mathrm{u}-1)_{+}} \pi_{t-1} \epsilon}{t^{6}}\right\}
    \end{aligned} 
\end{equation*} 
\begin{equation*}
    \begin{aligned}
    & \mathbb{P} (|\frac{1}{n} \sum_{r \in[\Gamma]} W_{r c}^{\mathrm{u}}\left(\boldsymbol{b}_{r}^{s}\right)^{*} \boldsymbol{b}_{r}^{t} - \frac{1}{R} \sum_{r \in[\Gamma]} W_{r c}^{\mathrm{u}} \sigma_{r}^{t}| \geq \epsilon)  \\
    & \qquad \leq t^{4} K K_{t-1} \Pi_{t-1} \exp \left\{\frac{-\kappa \kappa_{t-1}(\omega / \Gamma)^{2(\mathrm{u}-1)_{+}} \pi_{t-1} \epsilon^{2}}{t^{8}}\right\} .
    \end{aligned}
\end{equation*}
\textbf{(b)} Let $v$ be an integer with $v \in \{0,1\}$, we have the following for all $0\leq s\leq t+1$:
\begin{equation}
    \begin{aligned}
        &  \mathbb{P} \left(\frac{1}{L} \sum_{c \in[\hat{\Gamma}]} W_{r c}^{2 v} \sum_{\ell \in c} \max _{j \in \sec (\ell)}\left|\left[\boldsymbol{\Delta}_{t+1, t, c]}\right]_j\right|^2 \geq \epsilon\right) \\
        & \qquad \leq t^3 K K_{t-1}^{\prime} \Pi_{t-1}^{\prime} \exp \left\{\frac{-\kappa \kappa_{t-1}^{\prime}(\omega / \Gamma)^{2 v} \pi_{t-1}^{\prime} \epsilon}{t^6}\right\}
    \end{aligned}
\end{equation}
\begin{equation} \label{ineq:q}
    \begin{aligned}
        & \mathbb{P} \left(\left|\sum_{c \in[\hat{\Gamma}]} W_{r c}^v\left[\frac{1}{L}\left(\breve{\boldsymbol{q}}_c^s\right)^* \breve{\boldsymbol{q}}_c^{t+1}-\frac{1}{C} \psi_c^{t+1}\right]\right| \geq \epsilon\right)  \\ 
        & \qquad \leq t^4 K K_{t-1}^{\prime} \Pi_{t-1}^{\prime} \exp \left\{\frac{-\kappa \kappa_{t-1}^{\prime}(\omega / \Gamma)^{2 v} \pi_{t-1}^{\prime} \epsilon^2}{t^8(\log M)^2}\right\}
    \end{aligned}
\end{equation}
\label{lemma:concentration}
\end{lemma}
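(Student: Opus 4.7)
\medskip
\noindent
\textbf{Proof proposal for Lemma \ref{lemma:concentration}.} The plan is to argue by induction on the iteration index $t$, in the spirit of the conditioning technique of Bayati--Montanari and its spatially-coupled refinement in \cite{rush2021capacity}. The induction hypothesis $\mathcal{H}(t)$ will bundle together the four statements in the lemma (concentrations of $\boldsymbol{\Delta}_{t,t,r}$, $\boldsymbol{\Delta}_{t+1,t,c}$, of $(\boldsymbol{b}_r^s)^*\boldsymbol{b}_r^t$, and of $(\breve{\boldsymbol{q}}_c^s)^*\breve{\boldsymbol{q}}_c^{t+1}$) \emph{together with} a collection of auxiliary concentrations that are required to close the induction: the norms $\|\boldsymbol{q}^{t,r}\|^2/L$, $\|\boldsymbol{m}^{t,c}\|^2/L$ around $\sigma_r^t$ and $\frac{L}{n}\tau_c^t$ respectively; the projection coefficients $\alpha_i^{t,c}$ and $\gamma_i^{t,r}$ around deterministic SE values; and concentrations of scalar functions $\frac{1}{n}\sum_i \varphi(b_{r,i}^t, z_{0r,i})$ and $\frac{1}{L}\sum_j \psi(h_{c,j}^{t+1}, \beta_{c,j})$ for pseudo-Lipschitz test functions $\varphi,\psi$ (as in \cite[Lemma 7.6]{rush2021capacity}), which is what feeds the next iteration through $g_{\text{out}}$ and $g_{\text{in}}$.

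For the base case $t=0$, conditionally on $\boldsymbol{\beta}$ the vector $\boldsymbol{b}_r^0=\sum_c \mathbf{A}_{rc}[\boldsymbol{q}^{0,r}_c,\sqrt{W_{rc}}\boldsymbol{\beta}_c] - \boldsymbol{z}_{0r}$ is an explicit Gaussian, so $\|\boldsymbol{q}^{0,r}\|^2/L$ concentrates around $\sigma_r^0$ by Hoeffding/Bernstein applied to the block-Gaussian matrix entries, and the inner-product concentrations follow from Hanson--Wright. The scalar concentrations involving $g_{\text{out}}(\boldsymbol{b}_r^0+\boldsymbol{z}_{0r},\boldsymbol{y}_r,\sigma_r^0)$ are then obtained by combining (i) the sub-Gaussianity of $\epsilon$ (Assumption \ref{assumption1}), (ii) the Lipschitz continuity of $g_{\text{out}}$ in its three arguments (Assumption \ref{assumption2}, in conjunction with the lower bound on $\sigma_r^t$ from Lemma \ref{le: bound}), and (iii) Gaussian concentration for Lipschitz functions. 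This replaces the explicit AWGN identities used in \cite{rush2021capacity} by generic Lipschitz estimates.

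For the inductive step, condition on $\mathscr{F}_{t,t}$ (respectively $\mathscr{F}_{t+1,t}$) so that Lemma \ref{lemma:distribution} writes $\boldsymbol{b}_r^t$ (respectively $\boldsymbol{h}_c^{t+1}$) as a conditionally Gaussian piece plus the explicit deviation term $\boldsymbol{\Delta}_{t,t,r}$ (resp.\ $\boldsymbol{\Delta}_{t+1,t,c}$). I would then handle each summand of the deviation separately: the projected-noise terms $\frac{1}{\sqrt L}\|\boldsymbol{q}^{t,r}_{\perp,c}\|[\boldsymbol{P}^\parallel \boldsymbol{Z}]_{\cdot}$ by controlling the smallest singular value of $\boldsymbol{M}_t^c$ and $\boldsymbol{Q}_t^r$ (Lemma \ref{le: bound} ensures $\sigma_{\perp,r}^t,\tfrac{n}{L}\tau_{\perp,c}^t$ are $\Omega(\omega/\Gamma)$ or $\Omega((\omega/\Gamma)^2)$, so the relevant inverses are well-conditioned on a high-probability event); the coefficient discrepancies $\gamma_i^{t,r} - \bar\gamma_i^{t,r}$, $\alpha_i^{t,c} - \bar\alpha_i^{t,c}$, and the Onsager-type remainder $v_r^t\breve{\boldsymbol{m}}_r^{t-1}-\sum_{i}\gamma_i^{t,r}v_r^i\breve{\boldsymbol{m}}_r^{i-1}$ by expressing them as pseudo-Lipschitz functions of the previous-iterate inner products and invoking the inductive auxiliary concentrations; and the norm-discrepancy $\tfrac{1}{\sqrt L}\|\boldsymbol{q}^{t,r}_\perp\|-\sqrt{\sigma_{\perp,r}^t}$ by a Lipschitz reduction from the concentration of $\|\boldsymbol{q}^{t,r}\|^2/L$ plus projection corrections. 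Triangle inequality and union bounds over $r\in[\Gamma]$, $c\in[\hat\Gamma]$, and over the $O(t)$ earlier iterations then yield the stated bounds, with the combinatorial factors $(t!)^{14}$, $(t!)^{24}$, $\Gamma^{2t+1}$, and $(\Gamma/\omega)^{2t+1}$ arising naturally from iterating this procedure.

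The main obstacles I expect are threefold. First, the bookkeeping of the constants $K_t,\kappa_t,K_t',\kappa_t'$: each deviation term contributes a polynomial-in-$t$ loss, and closing the induction requires proving all auxiliary statements simultaneously with matching exponents, which is why the factorial prefactors appear. Second, the propagation of the $(\omega/\Gamma)$ factors through the well-conditioning estimates for $(\boldsymbol{M}_t^c)^T\boldsymbol{M}_t^c$ and $(\boldsymbol{Q}_t^r)^T\boldsymbol{Q}_t^r$; this is where the block structure of $\boldsymbol{W}$ interacts nontrivially with the smallest-eigenvalue bounds and where Lemma \ref{le: bound} is indispensable. Third, and most delicate, is the absence of the closed-form AWGN identities that drove the analogous step in \cite{rush2021capacity}: all nonlinear concentrations must be re-derived from Assumptions \ref{assumption1}--\ref{assumption3} via Gaussian Lipschitz concentration applied to $g_{\text{out}}$ and its derivatives, using Proposition \ref{prop:prop2} to identify the SE fixed-point quantities with derivatives of the potential $\Psi_{\text{out}}$. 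This substitution of Lipschitz-based arguments for exact AWGN computations is the main technical novelty the proof must carry out.
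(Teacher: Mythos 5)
Your plan follows essentially the same route the paper takes: the paper itself does not spell out a proof of Lemma~\ref{lemma:concentration} in this version (it defers the details to a longer version), stating only that it uses the conditioning technique of \cite{bayati2011dynamics,rush2021capacity} as the counterpart of \cite[Lemma 7.6]{rush2021capacity}, with the convergence of the $\mathbf{g}_{\text{out}}$-related terms handled carefully via Assumptions~\ref{assumption1}--\ref{assumption3} and the lower bounds of Lemma~\ref{le: bound} --- exactly the inductive, conditioning-based strategy with Lipschitz substitutes for the AWGN identities that you outline. Your sketch is consistent with, and somewhat more detailed than, what the paper provides, though of course it remains an outline rather than a verification of the stated constants $K_t,\kappa_t,K_t',\kappa_t'$ and the $(\omega/\Gamma)$ exponents.
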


Lemma \ref{lemma:concentration} is the main concentration lemma. The primary technique is the conditional technique as shown in \cite{bayati2011dynamics, rush2021capacity} and we should tackle the convergence of terms related to estimator $\textbf{g}_{out}$ carefully. It is the counterpart of \cite[Lemma 7.6]{rush2021capacity} and will be proved in the longer version. Theorem \ref{theo:main} can be easily proved through the combination of Lemma \ref{lemma:SE} and Lemma \ref{lemma:concentration}.

\begin{proof}[Proof of Theorem \ref{theo:main}]
    Recall from (\ref{eq:general}) that $\breve{\boldsymbol{q}}^t_c=\boldsymbol{\beta}^t_c-\boldsymbol{\beta}_c$. By setting $v=0$, (\ref{ineq:q}) implies for $0 \leq t \leq T(-1)$,
    \begin{equation}
        \begin{aligned}
             & P\left(\left|\frac{\left\|\boldsymbol{\beta}^{t+1}-\boldsymbol{\beta}\right\|^{2}}{L}-\frac{1}{\Gamma} \sum_{c \in[\Gamma]} \psi_{c}^{t}\right| \geq \epsilon\right)  \\
             & \qquad \qquad \qquad  \leq K_t \Gamma^{2t+3} \exp \{ \frac{-k_t N \omega^{2t+1} \epsilon^2}{(\log M)^{2t+2}\Gamma^{2t+1}}\}.
        \end{aligned}
    \end{equation}
Regarding the SPARC message vector, the section error rate can be bounded in terms of the MSE $\| \boldsymbol{\beta}^T_{\ell} - \boldsymbol{\beta}_{{\ell}} \|_2^2$, indicating
\begin{equation}
    \boldsymbol{\beta}^T_{\ell} \neq \boldsymbol{\beta}_{{\ell}} \quad \Rightarrow \quad \| \boldsymbol{\beta}^T_{\ell} - \boldsymbol{\beta}_{{\ell}} \|_2^2 \geq \frac{1}{4},
\end{equation}
By substituting $\epsilon$ with $\frac{\epsilon}{4}$, we obtain the concentration inequality for the section error rate in (\ref{eq:main}).
\end{proof}

\subsection{Performance of SC-GAMP algorithm for GLM} \label{app:glm}

In this section, we give a non-asymptotic analysis for SC-GAMP algorithm for GLM. The main difference in GLM lies in the assumption that the entries of the signal vector $\boldsymbol{\beta}$ are drawn from a generic prior $P_{\beta}$, as opposed to the section-wise structure inherent in a SPARC message vector. We adjust the variances of entries in $\mathbf{A}_{rc}$ to $\frac{W{rc}}{N/\Gamma}$. The estimator $\mathbf{g}_{\text{in}}(\boldsymbol{r}, \tau)$ now is:
\begin{equation}
    \mathbf{g}_{\text{in}}(\boldsymbol{r}, \tau) = \mathbb{E}[\mathbf{R}|\mathbf{R}+ \sqrt{\tau} \mathbf{Z} = \boldsymbol{r}],
\end{equation}
where $\mathbf{R} \sim P_{\beta}^{\otimes N/\Gamma}$ and $\mathbf{Z} \sim \mathcal{N}(0, \mathbf{I}_{N/ \Gamma})$. Then $\psi_{c}^{t+1}$ related to $\mathbf{g}_{\text{in}}(\boldsymbol{r}, \tau)$ is changed to:
\begin{equation}
    \psi_c^{t+1} = \mathbb{E} {[\beta - g_{\text{in}}(\sqrt{\tau} G + \beta, \tau)]^2},
\end{equation}
where the expectation is taken over $\beta \sim P_\beta$, $G \sim \mathcal{N}(0,1)$ independent of $\beta$. We make following assumptions for GLM and $g_{\text{in}}(r, \tau)$.
\begin{assumption} \label{assumption4}
    The components of the signal vector $\boldsymbol{\beta} \in \mathbb{R}^{N}$ are i.i.d sampled from a sub-Gaussian distribution $P_{\beta}$.
\end{assumption}
\begin{assumption} \label{assumption5}
    With growth of the signal dimension $N$, the sampling ratio $n/N$ remains constant and is denoted by $\alpha$.
\end{assumption}
\begin{assumption} \label{assumption6}
    The estimator $g_{\text{in}}(r, \tau)$ is continuous differentiable and Lipschitz with respect to the variable $r$ for different $\tau \in [0, \mathbb{E} \beta^2]$.
\end{assumption}

\noindent
\textbf{Remark 3}. It is easy to check that Gaussian, Bernoulli and Gaussian-Bernoulli distributions all satisfy Assumption \ref{assumption4} and \ref{assumption6}.

We present the non-asymptotic convergence result for the MSE of the SG-GAMP algorithm as follows:
\begin{theorem}
\label{theo:theo2}
Under Assumptions $\ref{assumption1}$ to $\ref{assumption6}$, for $t\geq 1$, the MSE of the SG-GAMP algorithm satisfies
\begin{equation*}
    P\left(\left|\frac{\left\|\boldsymbol{\beta}^{t}-\boldsymbol{\beta}\right\|^{2}}{N}-\frac{1}{\Gamma} \sum_{c \in[\Gamma]} \psi_{c}^{t}\right| \geq \epsilon\right) \leq K_t \Gamma^{2t+1} \exp \{ \frac{-k_t n \omega^{2t} \epsilon^2}{\Gamma^{2t+1}}\}.
\end{equation*}
Here $K_t$, $k_t$ are positive constants independent of $\epsilon, \Gamma, \omega$. (These constants differ from the ones in Theorem \ref{theo:main}).
\end{theorem}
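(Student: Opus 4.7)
The plan is to follow the proof of Theorem~\ref{theo:main} with modifications reflecting the three new features of the GLM setting: the generic sub-Gaussian prior $P_\beta$, the Lipschitz denoiser $g_{\text{in}}$ in place of the SPARC-specific softmax, and the variance scaling $W_{rc}/(N/\Gamma)$ for entries of $\mathbf{A}_{rc}$. The overall architecture --- a state-evolution control step followed by a conditional-technique concentration step --- is unchanged.

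First, I would recast the general recursion \eqref{eq:gen recursion} with the new normalization and the updated definition $\psi_c^{t+1}=\mathbb{E}[(\beta-g_{\text{in}}(\sqrt{\tau_c^t}G+\beta,\tau_c^t))^2]$, while the recursions for $\sigma_r^t$, $\phi_r^t$, $\tau_c^t$ keep their form. Because Theorem~\ref{theo:theo2} only asserts that the MSE tracks $\frac{1}{\Gamma}\sum_c\psi_c^t$ (rather than driving it to zero as in Theorem~\ref{theo:main}), the decoding-progression statement of Lemma~\ref{lemma:SE} is not required. What is still needed is the GLM analog of Lemma~\ref{le: bound}: uniform-in-$t$ lower bounds on $\sigma_{\perp,r}^t$ and $\frac{n}{N}\tau_{\perp,c}^t$. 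Under Assumption~\ref{assumption3} and the structure of $\boldsymbol{W}$, the SPARC argument transfers with only bookkeeping changes, and these bounds justify treating $g_{\text{out}}$ as Lipschitz with constants independent of $n,N$.

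Second, I would establish the GLM analog of Lemma~\ref{lemma:concentration} by running the conditional technique of \cite{bayati2011dynamics} on the recast recursion. The conditional-distribution Lemma~\ref{lemma:distribution} is prior-agnostic and carries over directly. The inductive concentration estimates for the deviation vectors $\boldsymbol{\Delta}_{t,t,r}$ and $\boldsymbol{\Delta}_{t+1,t,c}$ must be redone under Assumptions~\ref{assumption4} and \ref{assumption6}: inner products like $\frac{\Gamma}{N}(\breve{\boldsymbol{q}}_c^s)^*\breve{\boldsymbol{q}}_c^{t+1}$, previously handled by section-wise Hanson--Wright bounds in \cite{rush2021capacity}, now reduce to standard sub-Gaussian concentration of Lipschitz functions of the Gaussian vectors $\boldsymbol{Z}_{i,c}$. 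Because there are no sections, the $\max_{j\in\sec(\ell)}|\cdot|$ bounds and the $\log M$ factors that appear in the SPARC softmax analysis drop out, producing the cleaner exponent $n\omega^{2t}/\Gamma^{2t+1}$ claimed in Theorem~\ref{theo:theo2}.

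The main obstacle is closing the induction for the deviation terms using only the Lipschitz continuity of $g_{\text{in}}$ rather than the explicit softmax identities used in the SPARC analysis. Concretely, each inductive step must control expressions of the form $\frac{1}{N}\|g_{\text{in}}(\boldsymbol{r}+\boldsymbol{\eta},\tau)-g_{\text{in}}(\boldsymbol{r},\tau)\|^2$ by the Lipschitz constant squared times $\|\boldsymbol{\eta}\|^2/N$, and then concentrate $\|\boldsymbol{\eta}\|^2/N$ via the inductive hypothesis; a parallel argument handles the $g_{\text{out}}$ terms using Assumption~\ref{assumption2}. Once these block-wise bounds are in place, the GLM analog of \eqref{ineq:q} with $v=0$, summed over $c\in[\hat{\Gamma}]$ and extended trivially over the seed blocks $c\in\mathcal{I}$ (where $\breve{\boldsymbol{q}}_c^t=\boldsymbol{0}$), yields the stated tail bound on $\frac{1}{N}\|\boldsymbol{\beta}^t-\boldsymbol{\beta}\|^2$.
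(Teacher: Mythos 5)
Your overall route is the same one the paper intends: the paper gives no self-contained proof of Theorem~\ref{theo:theo2} (it states it as an extension of the conditional-technique concentration analysis, with the key concentration lemma, Lemma~\ref{lemma:concentration}, itself deferred to a longer version), and your outline --- recast the general recursion with the GLM normalization and the new $\psi_c^{t+1}$, carry over the prior-agnostic conditional-distribution Lemma~\ref{lemma:distribution}, redo the deviation-term concentration under Assumptions~\ref{assumption4} and \ref{assumption6} so that the $\log M$ and section-wise $\max$ structure disappear, then sum the analog of \eqref{ineq:q} with $v=0$ --- is exactly that program, and your explanation of why the exponent cleans up to $n\omega^{2t}\epsilon^2/\Gamma^{2t+1}$ is correct.

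The one step you should not dismiss as bookkeeping is the GLM analog of Lemma~\ref{le: bound}. You discard Lemma~\ref{lemma:SE} on the grounds that Theorem~\ref{theo:theo2} only asserts tracking, which is fine for the statement itself, but in the paper the lower bounds $\sigma_{\perp,r}^t \geq \overline{C}_1^2(\omega/\Gamma)^2$ and $\frac{n}{L}\tau_{\perp,c}^t \geq \overline{C}_2\,\omega/\Gamma$ are built from the constant $C=2\rho g/\omega$, where $g$ is defined by the SE-progression condition \eqref{eq:g(Delta)} established in Lemma~\ref{lemma:SE}; the wave analysis you dropped is precisely what supplies the non-degeneracy of the perpendicular variances there. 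For the GLM you therefore need an independent argument that $\sigma_{\perp,r}^t$ and $\frac{n}{N}\tau_{\perp,c}^t$ stay bounded away from zero (uniformly in $\Gamma,\omega$, with constants allowed to depend on $t$) --- e.g.\ via strict decrease of the SE up to iteration $t$ together with the strong convexity of $\Psi_{\text{out}}$ in Assumption~\ref{assumption3}, or a stopping criterion in the style of the AWGN analysis when the SE has effectively converged. This matters quantitatively, not just formally: those lower bounds are where the powers of $\omega/\Gamma$ in the final exponent come from, and they are also what licenses treating $g_{\text{out}}$ as uniformly Lipschitz across iterations (your Remark-2-style argument). With that piece supplied, the rest of your induction on the deviation vectors via sub-Gaussian concentration of Lipschitz functions of the conditional Gaussians is the standard and correct way to close the argument.
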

Theorem \ref{theo:theo2} can be seen as a extended non-asymptotic analysis of the concentration of SE in \cite{donoho2013information}.

\end{document}